
\documentclass[12pt]{amsart}
\usepackage{amssymb,amscd}
\usepackage{verbatim}

\usepackage{amsmath,amssymb,graphicx,mathrsfs}   
\usepackage[colorlinks=true,allcolors = blue]{hyperref} 

\textwidth 6.5truein
\textheight 8.67truein
\oddsidemargin 0truein
\evensidemargin 0truein
\topmargin 0truein

\let\frak\mathfrak
\let\Bbb\mathbb

\def\>{\relax\ifmmode\mskip.666667\thinmuskip\relax\else\kern.111111em\fi}
\def\<{\relax\ifmmode\mskip-.333333\thinmuskip\relax\else\kern-.0555556em\fi}
\def\vsk#1>{\vskip#1\baselineskip}
\def\vv#1>{\vadjust{\vsk#1>}\ignorespaces}
\def\vvn#1>{\vadjust{\nobreak\vsk#1>\nobreak}\ignorespaces}

  \let\ssize\scriptstyle
\let\sssize\scriptscriptstyle

\let\Medskip\medskip
\def\medskip{\par\Medskip}
\let\Bigskip\bigskip
\def\bigskip{\par\Bigskip}

\let\Maketitle\maketitle
\def\maketitle{\Maketitle\thispagestyle{empty}\let\maketitle\empty}

\newtheorem{thm}{Theorem}[section]
\newtheorem{cor}[thm]{Corollary}
\newtheorem{lem}[thm]{Lemma}
\newtheorem{prop}[thm]{Proposition}

\theoremstyle{definition}                                  
\numberwithin{equation}{section}

\theoremstyle{definition}
\newtheorem*{rem}{Remark}
\newtheorem*{example}{Example}

\let\mc\mathcal
\let\nc\newcommand

\let\la\lambda

\let\phi\varphi
\let\si\sigma

\let\om\omega
\let\Om\Omega

\let\der\partial

\let\ox\otimes

\let\geq\geqslant

\let\leq\leqslant

\let\on\operatorname
\let\bi\bibitem
\let\bs\boldsymbol

\def\C{{\mathbb C}}
\def\Z{{\mathbb Z}}

\def\F{{\mathbb F}}   

\def\+#1{^{\{#1\}}}

\def\beq{\begin{equation}}
\def\eeq{\end{equation}}
\def\be{\begin{equation*}}
\def\ee{\end{equation*}}

\nc{\bea}{\begin{eqnarray*}}
\nc{\eea}{\end{eqnarray*}}
\nc{\bean}{\begin{eqnarray}}
\nc{\eean}{\end{eqnarray}}

\let\ga\gamma

\nc{\Il}{{\mc I_{\bs\la}}}
\nc{\bla}{{\bs\la}}
\nc{\Fla}{\F_\bla}
\nc{\tfl}{{T^*\Fla}}
\nc{\GL}{{GL_n(\C)}}
\nc{\GLC}{{GL_n(\C)\times\C^*}}

\let\sd s 

\def\ddk_#1{\kk_{#1}\<\>\frac\der{\der\<\>\kk_{#1}}}

\def\bul{\mathbin{\raise.2ex\hbox{$\sssize\bullet$}}}
\def\intt{\mathchoice
{\mathop{\raise.2ex\rlap{$\,\,\ssize\backslash$}{\intop}}\nolimits}
{\mathop{\raise.3ex\rlap{$\,\sssize\backslash$}{\intop}}\nolimits}
{\mathop{\raise.1ex\rlap{$\sssize\>\backslash$}{\intop}}\nolimits}
{\mathop{\rlap{$\sssize\<\>\backslash$}{\intop}}\nolimits}}

\let\kk q 
\let\cc c

\let\Ko K

\def\GZ/{Gelfand-Zetlin}
\def\KZ/{{\slshape KZ\/}}
\def\qKZ/{{\slshape qKZ\/}}
\def\XXX/{{\slshape XXX\/}}

\def\Sym{\on{Sym}}

\nc{\A}{{\mc C}}

\def\Sing{{\on{Sing}\,}}
\def\sll{{\frak{sl}}}

\def\slt{{\frak{sl}_2}}

\def\Ik{{\mc J_r}}

\nc{\hsl}{\widehat{{\frak{sl}_2}}}

\nc{\BC}{{ \mathbb C}}
\nc{\lra}{\longrightarrow}
\nc{\CO}{{\mathcal{O}}}
\nc{\BZ}{{ \mathbb Z}}
\nc{\hfn}{\hat{\frak{n}}}

\begin{document}

\hrule width0pt
\vsk->

\title[On the number of $p$-hypergeometric solutions]
{On the number of $p$-hypergeometric solutions
\\
 of \KZ/ equations}

\author[Alexander Varchenko]
{Alexander Varchenko}

\maketitle

\begin{center}
{\it $^{\star}$ Department of Mathematics, University
of North Carolina at Chapel Hill\\ Chapel Hill, NC 27599-3250, USA\/}

\vsk.5>
{\it $^{ \star}$ Faculty of Mathematics and Mechanics, Lomonosov Moscow State
University\\ Leninskiye Gory 1, 119991 Moscow GSP-1, Russia\/}


\end{center}

\vsk>
{\leftskip3pc \rightskip\leftskip \parindent0pt \Small
{\it Key words\/}:  \KZ/ equations; master polynomials; $p$-hypergeometric solutions.

\vsk.6>
{\it 2020 Mathematics Subject Classification\/}: 11D79 (32G34, 33C60, 33E50)
\par}

{\let\thefootnote\relax
\footnotetext{\vsk-.8>\noindent
$^\star\<${\sl E\>-mail}:\enspace anv@email.unc.edu,
supported in part by NSF grant DMS-1954266
}}

\begin{abstract}

It is known that solutions of the \KZ/ equations can be written in the form of multidimensional hypergeometric
integrals. In 2017 in a joint paper of the author with V.\,Schechtman the construction of 
hypergeometric solutions was modified,
and solutions of the \KZ/ equations modulo a prime number $p$ were constructed. 
These solutions modulo $p$, called the $p$-hypergeometric 
solutions, are polynomials with integer coefficients.
A general problem is to determine the number of independent $p$-hypergeometric solutions
and understand the meaning of that number. 

In this paper we consider the \KZ/ equations associated with the space of singular vectors of weight $n-2r$ in the tensor power
$W^{\ox n}$ of the vector representation of $\frak{sl}_2$. In this case the hypergeometric solutions of the \KZ/ equations 
are given by $r$-dimensional hypergeometric integrals.  We consider the 
 module of  the corresponding $p$-hypergeometric solutions, determine its rank, 
and show that the rank equals the dimension of the space of suitable square integrable differential $r$-forms. 

\end{abstract}

{\small\tableofcontents\par}

\setcounter{footnote}{0}
\renewcommand{\thefootnote}{\arabic{footnote}}

\section{Introduction}

The Knizhnik-Zamolodchikov (\KZ/)
 equations is a system of differential equations for conformal  blocks in conformal field theory, see \cite{KZ}.
Versions of \KZ/ equations appear in mathematical physics, representation theory, enumerative geometry, algebraic geometry,
theory of special functions,  see for example \cite{EFK, MO, B}.

It is known that solutions of the \KZ/ equations can be written in the form of multidimensional hypergeometric
integrals, see \cite{SV1}.  Relatively recently, the construction of these
hypergeometric solutions was modified,
and solutions of the \KZ/ equations modulo a prime number $p$ were constructed  in \cite{SV2}. 
These solutions modulo $p$, called the $p$-hypergeometric solutions, are vector-valued 
polynomials with integer coefficients.
The general problem  is 
to understand how these polynomials reflect the remarkable properties  of the \KZ/ equations and their
hypergeometric solutions.

\vsk.2>

In this paper we address the particular problem to determine the number of independent $p$-hypergeometric solutions and understand the meaning of that number. 
We consider the \KZ/ equations associated with the space of singular
 vectors of weight $n-2r$ in the tensor power
$W^{\ox n}$ of the vector representation of $\frak{sl}_2$. In this case the hypergeometric solutions of the \KZ/ equations 
are given by $r$-dimensional hypergeometric integrals.  We consider the 
 module of  the corresponding $p$-hypergeometric solutions, determine its rank, 
and show that the rank equals the dimension of the space of suitable square integrable differential $r$-forms.

The KZ equations depend on a parameter $q\in\C^\times$. In this paper we assume that 
$q$ is a prime number less than $p$, and the pair $(p,q)$ satisfies
 certain conditions (the pair is of type 1).
 
 \vsk.2>
 On $p$-hypergeometric solutions and, more generally, on the solutions of the KZ equations modulo $p^s$ see
 \cite{SlV, V4, V5, V6, V7, V8, VZ1, VZ2}.

\vsk.2>

In Section \ref{sec 2} we define the KZ equations, recall the construction of solutions 
in the form of hypergeometric integrals and the construction of $p$-hypergeometric solutions. We also
introduce the module of $p$-hypergeometric solutions. 
The main result of the paper is Theorem \ref{thm li},
formulated in Section \ref{sec 3} and proved in Section \ref{sec 4}.
Theorem \ref{thm li} 
determines the rank of the module  of $p$-hypergeometric solutions.

In Section \ref{sec 5} 
we construct a suitable Cartier map, which relates 
 the hypergeometric solutions and  $p$-hypergeometric solutions. As a result
 of this construction, 
we interpret the rank of the module of $p$-hypergeometric solutions as the dimension
of some vector space of square integrable differential $r$-forms on $\Bbb P^r$,
the $r$-th direct power of the complex projective line.
Previously this result was known for $r=1$, see \cite{SlV}.

\medskip
\noindent
{\bf Acknowledgements.}
The author thanks Alexey Slinkin for useful discussions.

\section{$\sll_2$ \KZ/ equations}
\label{sec 2}

\subsection{Definition of equations}

Consider the  complex Lie algebra $\slt$ with generators $e,f,h$ and relations $[e,f]=h,\, [h,e]=2e, \,[h,f]=-2f$.
Consider the complex vector space $W$ with basis $w_1,w_2$ and the $\slt$-action,
\bea
e=\begin{pmatrix}
0 & 1
\\
0 & 0 & 
\end{pmatrix}, \qquad
f=\begin{pmatrix}
0 & 0
\\
1 & 0 & 
\end{pmatrix},
\qquad
h=\begin{pmatrix}
1 & 0
\\
0 & -1 & 
\end{pmatrix}.
\eea 
The $\slt$-module $W^{\ox n}$ has a basis labeled by subsets $J\subset\{1,\dots,n\}$,
\bea
V_J= w_{j_1}\ox \dots\ox w_{j_n},
\eea
where $j_i =1$ if $j_i\notin J$  and
$j_i =2$ if $j_i\in J$.

\vsk.2>

Consider the weight decomposition of $W^{\ox n}$ 
into eigenspaces of $h$,
\\
  $W^{\ox n} = \sum_{r=0}^n W^{\ox n}[n-2r]$. 
The vectors $V_J$ with $|J|=r$ form a  basis of $W^{\ox n}[n-2r]$.
Denote $\Ik$ the set of all $r$-element subsets of $\{1,\dots,n\}$.

Define the space of singular vectors of weight $n-2r$,
\bea
\Sing W^{\ox n}[n-2r] = \{ w\in W^{\ox n}[n-2r]\mid ew=0\}.
\eea
This space is nonempty if and only if $n\geq 2r$. We assume that $n\geq 2r$.
Then
\bea
\dim \Sing W^{\ox n}[n-2r] = \dim W^{\ox n}[n-2r] - \dim W^{\ox n}[n-2r+2]
=\binom{n}{r} - \binom{n}{r-1}.
\eea
Let $w=\sum_{J\in\Ik} c_JV_J\in W^{\ox n}[n-2r]$. Then $w\in \Sing W^{\ox n}[n-2r]$,
if and only if its coefficients satisfy the system of linear equations labeled
by $r-1$-element subsets  $K\subset\{1,\dots,n\}$,
\bean
\label{ev=0} 
\sum_{j\notin K}  c_{K\cup \{j\}}\,=\,0.
\eean
Define the Casimir element 
$\Om =  \frac12 h\ox h + e\ox f+f\ox e\, \in\, \slt\ox\slt$,
and the linear operators on $W^{\ox n}$ depending on parameters $z=(z_1,\dots,z_n)$,
\bea
 H_m(z) = \sum_{j=1\atop j\ne m}^n\frac{ \Om_{mj}}{z_m-z_j}\,,
\qquad m=1,\dots,n,
\eea
where 
$ \Om_{mj}:W^{\ox n}\to W^{\ox n}$ is the Casimir operator acting in the $m$th and $j$th tensor factors.
The operators $ H_m(z)$ are called the Gaudin Hamiltonians.  Denote
\bea
\der_m =\frac{\der}{\der z_m}\,,\qquad m=1,\dots,n.
\eea
For any nonzero number $q\in\C^\times$ and  $1\leq m,l\leq n$, we have
 \bean
 \label{Flat}
 [q\der_m -   
 H_m(z_1,\dots,z_n), q\der_l-   H_l (z_1,\dots,z_n) ]
=0,
\eean 

\noindent
and for any $x\in\slt$ and $m=1,\dots,n,$    we have
\bean
\label{h inv}
[ H_m(z_1,\dots,z_n), x\ox 1\ox\dots\ox 1+\dots + 1\ox\dots\ox 1\ox x] =0.
\eean
The system of differential equations 
\bean
\label{kz}
(q\der_m -     H_m(z_1,\dots,z_n))
 \tilde I(z_1,\dots,z_n), 
\qquad  m = 1, \dots , n,
\eean
on  a $W^{\ox n}$-valued function
$\tilde I(z_1, \dots, z_n)$ is called the \KZ/ equations, see \cite{KZ, EFK}.

\vsk.2>

By property \eqref{h inv} the Gaudin Hamiltonians  preserve every space 
$\Sing W^{\ox n}[n-2r]$.  Hence the system of  \KZ/ equations can be considered with values in
any particular space
\\
 $\Sing W^{\ox n}[n-2r]$.

\subsection{Gauge transformation} 

If $\tilde I(z)$ satisfies the \KZ/ equations \eqref{kz}, then the function $I(z)$, defined by 
\bean
\label{tilde I}
\tilde I(z) = I(z) \prod_{1\leq i<j\leq n}(z_i-z_j)^{1/2q}\,,
\eean
satisfies the equations
\bean
\label{KZ}
\Big(q\der_m - \sum_{j=1\atop j\ne m}^n\frac{ \Om_{mj}-1/2}{z_m-z_j}\Big) 
I(z_1,\dots,z_n), 
\qquad  m = 1, \dots , n,
\eean
which we also call the \KZ/ equations.

\vsk.2>

The linear operator $\Om -\frac12 : W^{\ox 2} \to W^{\ox 2}$ acts  as follows:
\bean
\label{Oa}
&&
w_1\ox w_1 \mapsto 0, \qquad 
\phantom{aaaaaaaaaaaaaaaa}
w_2\ox w_2 \mapsto 0,
\\
\notag
&&
w_1\ox w_2 \mapsto - w_1\ox w_2 + w_2\ox w_1,
\qquad 
w_2\ox w_1 \mapsto  w_1\ox w_2 - w_2\ox w_1.
\eean

\vsk.2>

We shall consider the \KZ/ equations \eqref{KZ} with values in $\Sing W^{\ox n}[n-2r]$
over the field of complex numbers, and we shall also consider 
the \KZ/ equations \eqref{KZ} modulo $p$, where $p$ is a
prime number.

\subsection
{Solutions of \KZ/ equations over $\C$}
  \label{Sol in C}

Denote $t=(t_1,\dots,t_k)$.
Define the   master function
\bean
\label{Master}
\Phi(t,z)
= 
\prod_{1 \leq i \leq j \leq r}  (t_i-t_j)^{2/q}
\prod_{s=1}^{n} \prod_{i=1}^{r} (t_i-z_s)^{-1/q}.
\eean
For any function  $F(t_1, \dots , t_r)$, denote
$\Sym_t  [F(t_1, \dots , t_r)]  = \sum_{\sigma \in S_r}\!
F(t_{\sigma_1}, \dots , t_{\sigma_r})$ .
For $J=\{j_1,\dots,j_r\}\in \Ik$ define the weight function
\bea
W_J(t,z)  =            
\Sym_t \Big[ \prod_{i=1}^{r}  \frac{1}{t_{i} - z_{j_i}} \Big]  .
\eea
For example,
\bea
W_{\{3\}} =  \frac{1}{t_1 - z_3} ,
\qquad
W_{\{4,5\}} =   \frac{1}{t_1-z_4} \frac{1}{t_2-z_5}
+ \frac{1}{t_2-z_4}  \frac{1}{t_1-z_5}\,  .
\eea
The function
\bean
\label{vW}
W(t,z)=\sum_{J\in \Ik}W_J(t,z) V_J
\eean
is  called the  $W^{\otimes n}[n -2r]$-weight vector-function.

\vsk.2>

Consider the  $W^{\otimes n}[n -2r]$-valued
function
\bean
\label{intrep}
I^{(\gamma)}(z_1, \dots , z_n) =  \int_{\ga(z)}  \Phi(t,z)\,
W(t,z)\, dt_1 \wedge \dots
   \wedge dt_r ,
\eean
where $ \gamma(z)$ in $\{z\} \times \C^r_t$ is a horizontal family of
$r$-dimensional cycles of the twisted homology defined by the multivalued
function $\Phi(t,z)$, see \cite{CF,SV1,DJMM,V1,V3}.
The cycles $\gamma(z)$ are $r$-dimensional analogs of Pochhammer double loops.

\begin{thm}
\label{thm s}
The function  $I^{(\gamma)} (z)$ takes values in
 $ \Sing W^{\otimes n}[n-2r]$
and satisfies the \KZ/ equations \eqref{KZ}.
\end{thm}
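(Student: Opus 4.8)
The plan is to prove the two assertions of Theorem \ref{thm s} separately: first that $I^{(\gamma)}(z)$ satisfies the \KZ/ equations \eqref{KZ}, and second that its values lie in $\Sing W^{\ox n}[n-2r]$. Both facts follow from the standard integral-representation technique, in which one differentiates under the integral sign and then recognizes the integrand of the resulting expression as an exact form plus the desired term.

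\emph{Satisfaction of the equations.} First I would compute $q\,\der_m I^{(\gamma)}(z)$ by differentiating the integrand $\Phi(t,z)\,W(t,z)$ with respect to $z_m$. Since the cycle $\gamma(z)$ is a horizontal family of twisted cycles, there is no boundary contribution from moving the cycle, so one may simply differentiate under the integral. The key computation is then the identity
\beq
\label{plan1}
q\,\der_m\bigl(\Phi\,W\bigr)
\;=\;
\Phi\Big(\sum_{j\ne m}\frac{\Om_{mj}-1/2}{z_m-z_j}\,W\Big)
\;+\;\sum_{i=1}^{r}\der_{t_i}\bigl(\Phi\,G_{m,i}\bigr),
\eeq
for suitable rational functions $G_{m,i}(t,z)$ with values in $W^{\ox n}[n-2r]$. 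The first term on the right reproduces the action of the \KZ/ connection, while the total $t_i$-derivatives integrate to zero over the closed cycle $\gamma(z)$. Establishing \eqref{plan1} is the heart of the argument: one uses that $q\,\der_m\Phi = \Phi\bigl(-\sum_i 1/(t_i-z_m)\bigr)$, together with the explicit combinatorial structure of the weight functions $W_J$ and the action \eqref{Oa} of $\Om-\tfrac12$. I expect the main obstacle to be the bookkeeping in verifying this local identity on integrands, namely matching the singularities at $t_i=z_m$ produced by $\der_m\Phi$ against those produced by the Gaudin Hamiltonian acting on $W$, and absorbing the remainder into exact forms; this is a recursion on the weight-function poles that must be carried out uniformly in all the subsets $J$.

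\emph{Singularity of the values.} Next I would show $ew\bigl(I^{(\gamma)}(z)\bigr)=0$, i.e.\ that the integral lands in the singular subspace. By \eqref{ev=0} this amounts to checking that, for every $(r-1)$-element subset $K$, the combination $\sum_{j\notin K}W_{K\cup\{j\}}(t,z)$ produces an exact form after multiplication by $\Phi$. The standard fact here is that $e\cdot W(t,z)$ equals $\sum_{i=1}^r \der_{t_i}$ of an explicit $W^{\ox n}[n-2r+2]$-valued rational function times $\Phi$, again because $q\,\der_{t_i}\Phi$ carries exactly the factors $2/(t_i-t_j)$ and $-1/(t_i-z_s)$ that arise when one applies the raising operator to the weight functions. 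Integrating this total derivative over $\gamma(z)$ gives zero, so $e\,I^{(\gamma)}=0$ as required.

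I would assume throughout the properties of the twisted homology cycles $\gamma(z)$ recalled after \eqref{intrep}, in particular that they are horizontal and closed, so that both differentiation under the integral and the vanishing of integrals of exact forms are justified. With those two exactness identities in hand, the theorem follows immediately: \eqref{plan1} gives the \KZ/ equations and the raising-operator identity gives membership in $\Sing W^{\ox n}[n-2r]$. The only genuinely delicate point is the verification of the integrand-level identities, which are local computations independent of the choice of cycle and can be reduced, by symmetry under $S_r$, to checking the coefficient of each basis vector $V_J$.
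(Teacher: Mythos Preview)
The paper does not actually prove Theorem~\ref{thm s}; it simply states the result and refers the reader to the literature (\cite{CF, DJMM, SV1}) for the proof. Your proposal outlines precisely the standard argument found in those references---differentiating under the integral, reducing to an integrand-level identity of the form \eqref{plan1} up to exact $t$-forms, and similarly handling $e\cdot W$ to land in the singular subspace---so your sketch is correct and matches the approach of the cited sources, even though the paper itself offers no proof to compare against.
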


This theorem and  its generalizations can be found, for example,  in \cite {CF, DJMM, SV1}.

\vsk.2>

The solutions in Theorem \ref{thm s}  are called the  
hypergeometric solutions of the \KZ/ equations.

\begin{thm}
[{\cite[Theorem 12.5.5]{V1}}]
\label{thm alls}
If $q\in\C^\times$ is generic, then all solutions of the \KZ/ equations \eqref{KZ} have this form.

\end{thm}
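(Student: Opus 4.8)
The plan is to show that for generic $q$ the hypergeometric map $\ga\mapsto I^{(\ga)}$ of Theorem \ref{thm s} is an isomorphism onto the entire solution space. The \KZ/ equations \eqref{KZ} form an integrable first-order linear system with values in the finite-dimensional space $\Sing W^{\ox n}[n-2r]$: by the flatness relation \eqref{Flat} (transported through the gauge transformation to \eqref{KZ}) together with the invariance \eqref{h inv}, the connection is flat, so its multivalued holomorphic solutions on the configuration space $\{z:z_i\neq z_j\}$ form a local system of rank
$$
N:=\dim\Sing W^{\ox n}[n-2r]=\binom{n}{r}-\binom{n}{r-1}.
$$
Theorem \ref{thm s} gives a linear map from the space of admissible cycles $\ga$ into this $N$-dimensional solution space. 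Once the source is shown to have dimension $N$ as well, it suffices to prove the map is injective, equivalently that its image has dimension $N$.

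First I would identify the source. For fixed generic $z$ the master function $\Phi(\cdot,z)$ of \eqref{Master} defines a rank-one local system $\mathcal L_z$ on the complement
$$
U_z=\{\,t\in\C^r: t_i\neq t_j\ (i<j),\ t_i\neq z_s\ \forall\, i,s\,\}
$$
of the discriminantal arrangement, and the integral \eqref{intrep} is the period pairing of a twisted homology class $[\ga]\in H_r(U_z,\mathcal L_z)$ against the closed twisted $r$-form $\Phi\,W(t,z)\,dt_1\wedge\dots\wedge dt_r$. For generic exponents $2/q$ and $-1/q$ — guaranteed by generic $q$ — the twisted homology of such an arrangement complement is concentrated in the middle degree $r$, and its dimension is computed combinatorially (Aomoto; Schechtman--Varchenko). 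One then isolates the subspace of homology whose periods are $\Sing$-valued; by the construction of the weight functions $W_J$ together with the singular-vector relations \eqref{ev=0}, this subspace has dimension exactly $N$, matching the solution space.

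The hard part will be the injectivity of the map, that is, the nondegeneracy of the period pairing between these $N$ cycles and the cohomology classes represented by $\Phi\,W_J\,dt$ for $J\in\Ik$. This reduces to showing that a suitable $N\times N$ determinant of periods is nonzero. Here the genericity of $q$ is indispensable: by the Schechtman--Varchenko determinant formula for hypergeometric periods, this determinant factors into an explicit product of elementary $\Gamma$-type quantities built from the exponents $2/q$ and $-1/q$, each of which vanishes only for $q$ in a discrete exceptional set. Establishing and evaluating this determinant formula is the technical heart of the argument, since it simultaneously certifies that the relevant cycles are linearly independent and that their periods span the solution space.

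Finally, combining injectivity with the dimension count of the previous step, the map $\ga\mapsto I^{(\ga)}$ is an isomorphism onto the $N$-dimensional solution space, so every solution of \eqref{KZ} is a hypergeometric integral of the stated form; the fact that the image lies in $\Sing W^{\ox n}[n-2r]$ is already supplied by Theorem \ref{thm s}.
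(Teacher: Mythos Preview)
The paper does not supply its own proof of this theorem: it is stated with a citation to \cite[Theorem 12.5.5]{V1} and left at that. So there is nothing in the present paper to compare your argument against; the result is imported wholesale from Varchenko's monograph.

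That said, your outline is a faithful sketch of how the cited proof actually proceeds: match the rank of the local system of solutions with the dimension of the relevant twisted homology of the discriminantal arrangement, and then invoke a Selberg-type determinant formula for the hypergeometric pairing to establish nondegeneracy for generic $q$. What you have written is a plan rather than a proof --- you correctly flag the determinant formula as ``the technical heart'' but do not carry it out --- and filling in that step is essentially the content of several chapters of \cite{V1} (and of \cite{SV1}). For the purposes of this paper, a citation is all that is expected; your sketch would be appropriate as motivating commentary but is not a self-contained argument.
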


For special values of the parameter $q$ the space of
the  hypergeometric solutions of the \KZ/ equations may span only 
a proper subspace of the space of all solutions,
see in \cite{FSV1, FSV2} the discussion of 
the relations of this subspace of solutions and conformal blocks in conformal field theory.

\subsection{$p$-integrals}

Let $p$ be a positive integer. Let 
$f(t_1,\dots,t_k)=\sum_{d_1,\dots,d_k} c_{d_1,\dots,d_k} t_1^{d_1}\dots t_k^{d_k}$\,
be a polynomial. Let $l=(l_1,\dots,$ $l_r)\in \Z_{>0}^r$. The coefficient
$c_{l_1p-1,\dots,l_rp-1}$ will be  called the $p$-integral of $f(t_1,\dots,t_k)$
over the cycle $\{l_1,\dots,l_r\}_p$
and is denoted by 
\bea
 \int_{\{l_1,\dots,l_r\}_p} f(t_1,\dots,t_k) \,dt_1\dots dt_r\,.
\eea

\subsection{\KZ/ equations modulo $p$} 
\label{sec 2.5}

Let $p$ and $q$ be prime numbers, $p>q$.
We consider the system of \KZ/ equations \eqref{KZ} with parameter $q$ modulo $p$.
Namely, we look for $W^{\otimes n}[n-2r]$-valued polynomials in $z_1,\dots,z_n$
with integer coefficients, which satisfy system \eqref{KZ} modulo $p$
and with values in the subspace $\Sing W^{\otimes n}[n-2r]$ modulo $p$.
More precisely, let $I(z_1,\dots,z_n) = \sum_{J\in\Ik} I_J(z_1,\dots,z_n) V_J$ with 
$I_J(z_1,\dots,z_n)\in \Z[z_1,\dots,z_n]$. We request that 
\begin{itemize}
\item
for any $m=1,\dots,n$, the rational function
$(q\der_m - H_m(z))I(z)$ can be written as a ratio 
of two polynomials with integer coefficients  such that the denominator
is nonzero modulo $p$, while the numerator is zero modulo $p$;
\item
the coefficients $I_J(z_1,\dots,z_n)$ of the polynomial
$I(z)$ satisfy equations \eqref{ev=0} modulo $p$.
\end{itemize}

\vsk.2>

A construction of such solutions was presented in \cite{SV2}.

\vsk.2>

Let $M$, $c$ be the least positive integers such that 
\bean
\label{Mab}
M\equiv -\frac{1}q,
\qquad
c \equiv \frac2q \ \ 
\pmod{p}.
\eean
Let $t=(t_1,\dots,t_r)$, $z=(z_1,\dots,z_n)$.  
Define the master polynomial, 
\bean
\label{Phi p}
\Phi_p(t,z) = \prod_{1\leq i<j\leq r}(t_i-t_j)^c\prod_{i=1}^r\prod_{s=1}^n(t_i-z_s)^{M}.
\eean
Recall the weight vector-function $W(t,z)$ in \eqref{vW}. The function
$\Phi_p(t,z) W(t,z)$ is a $W^{\otimes n}[n -2r]$-valued polynomial in $t,z$.
Let $(l_1,\dots,l_r)\in \Z_{>0}^r$. 
Denote
\bean
\label{def I}
I^{(l_1,\dots,l_r)}(z) = \int_{\{l_1,\dots,l_r\}_p} \Phi_p(t,z) W(t,z) \,dt_1\dots dt_r\,.
\eean
This is a $W^{\otimes n}[n -2r]$-valued polynomial in $z$.

\begin{thm} [\cite{SV2}]
\label{thm sv2} 
For any positive integers $(l_1,\dots,l_r)$, the polynomial
$I^{(l_1,\dots,l_r)}(z)$ is a solution of the \KZ/ equations modulo $p$ with values in
$\Sing W^{\ox n}[n-2r]$ modulo $p$.

\end{thm}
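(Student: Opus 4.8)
The plan is to imitate, at the level of formal power‑series coefficients, the classical argument that proves Theorem \ref{thm s} for the complex hypergeometric integrals \eqref{intrep}. The essential observation is that both the \KZ/ operators $q\der_m - H_m(z)$ and the singular‑vector equations \eqref{ev=0} are, after the gauge transformation, encoded by the master function $\Phi$ via a total $t$‑derivative, and the $p$‑integral is engineered precisely so that taking the coefficient $c_{l_1p-1,\dots,l_rp-1}$ kills such total derivatives modulo $p$.

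First I would record the key exponent identities. Since $M\equiv -1/q$ and $c\equiv 2/q\pmod p$, the logarithmic $t$‑derivatives of $\Phi_p$ reproduce, modulo $p$, the logarithmic $t$‑derivatives of the complex master function $\Phi$ of \eqref{Master}: one has
\beq
\frac{\der_t\Phi_p}{\Phi_p}\equiv q\,\frac{\der_t\Phi}{\Phi}\pmod p,
\eeq
in the sense that the rational functions $c/(t_i-t_j)$ and $M/(t_i-z_s)$ match $(2/q)/(t_i-t_j)$ and $(-1/q)/(t_i-z_s)$ modulo $p$. Likewise $\Phi_p$ satisfies the same first‑order relations in the $z_m$ variables as $\Phi$ does over $\C$. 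The point of these identities is that all the differential relations used in the complex proof become polynomial congruences modulo $p$ once $\Phi$ is replaced by $\Phi_p$ and the overall factor of $q$ is absorbed.

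Second, the crucial analytic input becomes a purely combinatorial statement about coefficients. For any polynomial $g(t,z)$ in $\Z[t,z]$ and any index $i$, the coefficient of $t_1^{l_1p-1}\cdots t_r^{l_rp-1}$ in $\der_{t_i} g$ vanishes, because the relevant monomial in $g$ would need exponent $l_ip$ in $t_i$, and differentiation brings down the factor $l_ip\equiv 0\pmod p$. Thus
\beq
\int_{\{l_1,\dots,l_r\}_p}\der_{t_i} g\,(t,z)\,dt_1\dots dt_r\equiv 0\pmod p.
\eeq
This is the mod‑$p$ substitute for Stokes' theorem: it says that the $p$‑integral annihilates every $t$‑derivative, exactly as the cycle $\gamma(z)$ in \eqref{intrep} annihilates exact forms. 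The plan is then to run the complex derivation of Theorem \ref{thm s} verbatim, but with the total $t$‑derivatives it produces now removed by this vanishing rather than by Stokes. Concretely, applying $q\der_m-H_m(z)$ to $\Phi_p\,W$ and using the $z$‑relations for $\Phi_p$ together with the weight‑function identities for $W$ yields, modulo $p$, an expression of the form $\sum_i \der_{t_i}(\Phi_p\,G_{m,i})$ for suitable polynomial $G_{m,i}$; taking the $p$‑integral kills it. The singular‑vector condition \eqref{ev=0} is handled the same way: the identity expressing $e\cdot(\Phi\,W)$ as a $t$‑derivative over $\C$ has an exact mod‑$p$ analogue for $\Phi_p\,W$, so the $p$‑integral of $I^{(l_1,\dots,l_r)}$ lands in $\Sing W^{\ox n}[n-2r]$ modulo $p$.

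The main obstacle I anticipate is bookkeeping rather than conceptual: one must verify that every rational function produced by the complex computation, after multiplication by the common denominator $\prod(t_i-t_j)\prod(t_i-z_s)$, is a genuine polynomial with integer coefficients, so that the coefficient‑extraction is legitimate and the factor $q$ (invertible modulo $p$ since $p>q$) can be cleared. In particular one must check that the denominators appearing in $H_m(z)$ are units modulo $p$ on the relevant locus and that no spurious $p$‑divisible cancellation is needed. Once the integrality and the two congruence identities of the first two steps are in place, the theorem follows by combining them with the total‑derivative vanishing; this is essentially the content of the construction in \cite{SV2}, and I would cite that computation for the detailed polynomial manipulations while presenting the structure above as the proof.
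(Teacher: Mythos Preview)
The paper does not give its own proof of this theorem; it simply attributes the result to \cite{SV2}. Your outline is essentially the argument carried out there: the $p$-integral $\int_{\{l_1,\dots,l_r\}_p}$ annihilates $\der_{t_i}$ of any integer polynomial modulo $p$ (the surviving coefficient acquires a factor $l_ip\equiv 0$), and this replaces Stokes' theorem in the classical proof of Theorem~\ref{thm s}, while the congruences $M\equiv -1/q$, $c\equiv 2/q$ make the polynomial identities for $\Phi_p W$ match those for $\Phi W$ modulo $p$.

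One small slip worth fixing: in your displayed congruence the extra factor $q$ on the right should not be there, since $c\equiv 2/q$ and $M\equiv -1/q$ already reproduce the exponents of $\Phi$ directly; your prose sentence immediately after states the correct matching, so this is just a typo in the display rather than a gap in the argument.
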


We call the polynomials
$I^{(l_1,\dots,l_r)}(z)$ the {\it $p$-hypergeometric} solutions. 

\vsk.2>
In this paper we determine the number of independent
$p$-hypergeometric solutions under the assumption that $(p,q)$ is a pair of type 1.

\begin{rem}
The symmetric group $S_n$ acts on the polynomials in $z_1,\dots,z_n$ with values in $W^{\ox n}$
by permuting the tensor factors and the variables simultaneously,
\bea
\si (p(z_1,\dots,z_n)\, v_1\ox\dots\ox v_n) = 
(p(z_{\si(1)},\dots,z_{\si(n)})\, v_{\si^{-1}(1)}\ox\dots\ox v_{\si^{-1}(n)},
\qquad \si\in S_n\,.
\eea

It is clear that for any positive integers $(l_1,\dots,l_r)$, the polynomial
$I^{(l_1,\dots,l_r)}(z)$ is symmetric with respect to the $S_n$-action.

\end{rem}

\subsection{Module of $p$-hypergeometric solutions}
\label{sec dM}

Denote $\Z[z^p]=\Z[z_1^p,\dots,z_{n}^p]$. 
Let 
\bean
\label{Def M_M}
\mathcal{M}\,=\,\Big\{ \sum_{l_1,\dots,l_r} f_{l_1,\dots,l_r}(z) I^{(l_1,\dots,l_r)}(z) \mid  f_{l_1,\dots,l_r}(z)\in\Z[z^p]\Big\},
\eean
be the $\Z[z^p]$-module generated by the $p$-hypergeometric
solutions $I^{(l_1,\dots,l_r)}(z)$ of Theorem  \ref{thm sv2}.
Every element of $\mc M$ is a solution of the \KZ/ equations modulo $p$ with values in
$\Sing W^{\ox n}[n-2r]$ modulo $p$. Indeed  the \KZ/ equations \eqref{KZ} are linear and
$\frac{\der z_i^p}{\der z_j} \equiv 0$ (mod $p$)  for all $i,j$.

\vsk.2>
We say that two elements $I, I' \in \mc M$ are equivalent and write $I\sim I'$,    if $I-I'$ is divisible by $p$.
Then the set $\mc M/(\sim)$ of equivalence classes  is an $\F_p[z^p]$-module.

\subsection{More general choice of the numbers $M$ and $c$}

For $s=1,\dots,n$ fix a positive integer  $M_s$ such that 
\bea
M_s\equiv -\frac{1}q 
\pmod{p}.
\eea
Fix a positive integer $c'$ such that 
\bea
c' \equiv \frac2q 
\pmod{p}.
\eea
Define the master polynomial, 
\bea
\Phi_p(t,z; \vec M,c') = \prod_{1\leq i<j\leq r}(t_i-t_j)^{c'}\prod_{i=1}^r\prod_{s=1}^n(t_i-z_s)^{M_s}.
\eea
Recall the weight vector-function $W(t,z)$ in \eqref{vW}. The function
$\Phi_p(t,z; \vec M,c') W(t,z)$ is a $W^{\otimes n}[n -2r]$-valued polynomial in $t,z$ with integer coefficients.
Let $l=(l_1,\dots,l_r)\in \Z_{>0}^r$. 
Denote
\bea
I^{(l_1,\dots,l_r)}(z; \vec M,c') = \int_{\{l_1,\dots,l_r\}_p} \Phi_p(t,z; \vec M,c') W(t,z) \,dt_1\dots dt_r\,.
\eea
This is a $W^{\otimes n}[n -2r]$-valued polynomial in $z$ with integer coefficients.

\begin{thm} [\cite{SV2}]
\label{thm sv2n}

For any positive integers $(l_1,\dots,l_r)$, the polynomial
$I^{(l_1,\dots,l_r)}(z; \vec M,c')$ is a solution of the \KZ/ equations modulo $p$ with values in
$\Sing W^{\ox n}[n-2r]$ modulo $p$.

\end{thm}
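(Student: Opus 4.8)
The plan is to run the proof of Theorem \ref{thm sv2} from \cite{SV2} essentially word for word, observing that that argument never uses the specific values of the exponents — only the congruences $M_s\equiv -1/q$ and $c'\equiv 2/q\pmod p$. I would begin by recording the two logarithmic-derivative congruences that drive everything. Since the factor of $\Phi_p(t,z;\vec M,c')$ containing $z_m$ is $\prod_i(t_i-z_m)^{M_m}$ and the factor free of $z$ is $\prod_{i<j}(t_i-t_j)^{c'}$, logarithmic differentiation together with $-qM_m\equiv1$, $qc'\equiv2$, $qM_s\equiv-1\pmod p$ gives, modulo $p$, the polynomial congruences
\begin{align*}
q\,\frac{\der}{\der z_m}\Phi_p &\equiv \sum_{i=1}^r\frac{\Phi_p}{t_i-z_m}, \\
q\,\frac{\der}{\der t_i}\Phi_p &\equiv \Big(2\sum_{j\ne i}\frac{1}{t_i-t_j}-\sum_{s=1}^n\frac{1}{t_i-z_s}\Big)\Phi_p \pmod p,
\end{align*}
each side being a genuine element of $\Z[t,z]$ because the relevant factors of $\Phi_p$ absorb the denominators. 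These are exactly the relations satisfied by $q\,\der\log\Phi$ for the complex master function \eqref{Master}, and they coincide with the ones used for the uniform exponents in Theorem \ref{thm sv2}.

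Next I would invoke the Stokes-type identity underlying Theorem \ref{thm s}. Over $\C$ the proof that $I^{(\gamma)}$ solves \eqref{KZ} rests on an identity of the form
\[
\Big(q\,\der_m-\sum_{j\ne m}\frac{\Om_{mj}-1/2}{z_m-z_j}\Big)\big[\Phi\,W\big]
=\sum_{i=1}^r\frac{\der}{\der t_i}\big[\Phi\,G_{m,i}\big],
\]
where the $W^{\ox n}[n-2r]$-valued functions $G_{m,i}$ are built from the weight functions $W_J$, and the right-hand side integrates to zero over the closed cycle $\gamma(z)$. Because this identity is deduced purely from the logarithmic derivatives above together with the combinatorics of $W$ (which is insensitive to $p$), replacing $\Phi$ by $\Phi_p(t,z;\vec M,c')$ and reducing modulo $p$ turns it into a congruence of polynomials $\big(q\,\der_m-\sum_{j\ne m}(\Om_{mj}-1/2)/(z_m-z_j)\big)[\Phi_p\,W]\equiv\sum_i\frac{\der}{\der t_i}[\Phi_p\,G_{m,i}]\pmod p$. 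Now I apply the $p$-integral $\int_{\{l_1,\dots,l_r\}_p}$, i.e.\ extraction of the coefficient of $t_1^{l_1p-1}\cdots t_r^{l_rp-1}$. This operation annihilates each summand $\frac{\der}{\der t_i}[\,\cdot\,]$ modulo $p$, since the coefficient of $t_i^{l_ip-1}$ in $\frac{\der}{\der t_i}g$ equals $l_ip$ times the coefficient of $t_i^{l_ip}$ in $g$, hence is $\equiv0\pmod p$; this is the algebraic replacement for Stokes' theorem. Consequently $I^{(l_1,\dots,l_r)}(z;\vec M,c')$ satisfies \eqref{KZ} modulo $p$. The singular-vector condition \eqref{ev=0} is handled the same way: the telescoping relation that makes the $e$-action on $\Phi\,W$ a $t$-derivative survives reduction mod $p$, and $p$-integration kills it, so the values lie in $\Sing W^{\ox n}[n-2r]$ modulo $p$.

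The main obstacle is the bookkeeping at the boundary between the complex-analytic and the mod-$p$ worlds: one must confirm that the Stokes identity, transcendental over $\C$, descends to an honest identity in $\Z[t,z]$ (up to a power of $q$) whose reduction modulo $p$ is legitimate. Three points need checking. First, that $q$ is invertible modulo $p$, which holds because $0<q<p$ and $p$ is prime, so clearing $q$ introduces no spurious factor of $p$. Second, that every rational expression appearing multiplied by $\Phi$ — the $G_{m,i}$ and the terms $\Phi/(t_i-z_m)$ above — is actually a polynomial once multiplied by $\Phi_p$, so that the $p$-integral is defined and coefficient extraction commutes with the differential operators. Third, that the congruence substitution $-1/q\mapsto M_s$, $2/q\mapsto c'$ is compatible term by term, with no hidden use of the equalities $M_1=\dots=M_n$ or of the minimality of the representatives $M,c$; this is precisely the point where the generalization over Theorem \ref{thm sv2} must be verified. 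Once these are in place, the passage to the present statement is immediate.
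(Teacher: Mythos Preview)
Your plan is correct, and it is precisely the argument of \cite{SV2} to which the paper defers: the theorem carries the citation \cite{SV2} and no proof is given in the present paper, so there is nothing further to compare at the level of the cited statement.

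That said, the paper does implicitly supply a shortcut you might prefer over rerunning the full \cite{SV2} machinery. The proof of Theorem~\ref{thm non}, immediately following, writes $c'=c+d_0p$ and $M_s=M+d_sp$ with $d_0,\dots,d_n\ge 0$ and observes the congruence
\[
\Phi_p(t,z;\vec M,c')\;\equiv\;\Phi_p(t,z)\,\prod_{i<j}(t_i^p-t_j^p)^{d_0}\prod_{i}\prod_{s}(t_i^p-z_s^p)^{d_s}\pmod p .
\]
Multiplying by $W(t,z)$ and extracting the coefficient of $t_1^{l_1p-1}\cdots t_r^{l_rp-1}$, the extra factor is a polynomial in $t_1^p,\dots,t_r^p,z_1^p,\dots,z_n^p$, so $I^{(l_1,\dots,l_r)}(z;\vec M,c')$ is, modulo $p$, a $\Z[z^p]$-linear combination of the solutions $I^{(l_1',\dots,l_r')}(z)$ of Theorem~\ref{thm sv2}. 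Since Section~\ref{sec dM} already notes that every element of $\mc M$ satisfies the \KZ/ equations modulo $p$ with values in $\Sing W^{\otimes n}[n-2r]$ modulo $p$ (linearity of \eqref{KZ} plus $\partial_m z_s^p\equiv 0$), the conclusion follows immediately. This route reduces the generalized-exponent statement to the minimal-exponent case without revisiting the Stokes identity or the three bookkeeping checks you flagged; your direct approach, by contrast, gives an independent verification and makes transparent that nothing beyond the congruences $M_s\equiv -1/q$, $c'\equiv 2/q$ is used.
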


\begin{thm}
\label{thm non}
For any positive integers $(l_1,\dots,l_r)$, the polynomial
$I^{(l_1,\dots,l_r)}(z; \vec M,c')$ belongs to the module $\mc M$ of $p$-hypergeometric solutions.

\end{thm}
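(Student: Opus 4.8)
The plan is to observe that the two master polynomials differ by a $p$-th power and to reduce that $p$-th power modulo $p$ via the Frobenius identity. Since $M$ is the \emph{least} positive integer with $M\equiv-1/q\pmod p$ while each $M_s$ is a positive integer with $M_s\equiv-1/q\pmod p$, we have $M_s=M+a_sp$ with $a_s\in\Z_{\ge0}$; likewise $c'=c+bp$ with $b\in\Z_{\ge0}$. Comparing the two master polynomials, the first step is to record the exact factorization
\be
\Phi_p(t,z;\vec M,c')=\Phi_p(t,z)\,H(t,z)^p,\qquad H(t,z)=\prod_{1\le i<j\le r}(t_i-t_j)^{b}\prod_{i=1}^r\prod_{s=1}^n(t_i-z_s)^{a_s},
\ee
where $H$ is a polynomial with integer coefficients.

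The second step applies the Frobenius congruence: for any polynomial $H$ with integer coefficients one has $H(t,z)^p\equiv H(t^p,z^p)\pmod p$, since $a^p\equiv a\pmod p$ for every integer $a$ and all mixed multinomial terms carry a factor of $p$. Expanding $H(t^p,z^p)=\sum_{\beta,\delta}h_{\beta,\delta}\,t_1^{p\beta_1}\cdots t_r^{p\beta_r}z_1^{p\delta_1}\cdots z_n^{p\delta_n}$ with $h_{\beta,\delta}\in\Z$, $\beta\in\Z_{\ge0}^r$, $\delta\in\Z_{\ge0}^n$, and multiplying by the weight function, I would obtain
\be
\Phi_p(t,z;\vec M,c')\,W(t,z)\ \equiv\ \sum_{\beta,\delta}h_{\beta,\delta}\,z^{p\delta}\,t^{p\beta}\,\Phi_p(t,z)\,W(t,z)\pmod p.
\ee

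The third step is to take the $p$-integral over $\{l_1,\dots,l_r\}_p$, i.e.\ to extract the coefficient of $t_1^{l_1p-1}\cdots t_r^{l_rp-1}$. The monomial $z^{p\delta}$ does not involve $t$ and pulls out of the integral, while multiplication by $t^{p\beta}$ merely shifts the $t$-degree; hence the coefficient of $t_1^{l_1p-1}\cdots t_r^{l_rp-1}$ in $t^{p\beta}\Phi_p(t,z)W(t,z)$ equals the coefficient of $t_1^{(l_1-\beta_1)p-1}\cdots t_r^{(l_r-\beta_r)p-1}$ in $\Phi_p(t,z)W(t,z)$. When $l_i-\beta_i\ge1$ for all $i$ this is precisely $I^{(l_1-\beta_1,\dots,l_r-\beta_r)}(z)$; when some $l_i-\beta_i\le0$ the requested exponent is negative and the coefficient vanishes because $\Phi_p W$ is a polynomial in $t$. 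Therefore
\be
I^{(l_1,\dots,l_r)}(z;\vec M,c')\ \equiv\ \sum_{\beta,\delta}h_{\beta,\delta}\,z^{p\delta}\,I^{(l_1-\beta_1,\dots,l_r-\beta_r)}(z)\pmod p,
\ee
the sum now being over those $\beta$ with $l_i-\beta_i\ge1$ for all $i$. Since $h_{\beta,\delta}\in\Z$ and $z^{p\delta}\in\Z[z^p]$, the right-hand side lies in $\mc M$, which gives the claim (membership being understood modulo $p$, i.e.\ at the level of the $\F_p[z^p]$-module $\mc M/(\sim)$).

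I expect the only real care to be bookkeeping rather than anything conceptual. The heart of the argument is the single Frobenius identity $H^p\equiv H(t^p,z^p)$, which turns the extra factor into a $\Z[z^p]$-linear combination of index shifts of the base $p$-integrals, so there is no analytic obstacle. The points that must be checked are the nonnegativity of $a_s$ and $b$ (from the "least positive" definitions of $M$ and $c$ together with $M_s\equiv M$ and $c'\equiv c\pmod p$), the validity of the Frobenius reduction for polynomials with integer coefficients, and the vanishing of the out-of-range terms with $l_i-\beta_i\le0$.
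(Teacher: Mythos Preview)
Your proof is correct and follows essentially the same approach as the paper: writing $M_s=M+d_sp$, $c'=c+d_0p$ with $d_s\ge0$, and then using the Frobenius congruence $(x-y)^p\equiv x^p-y^p\pmod p$ to obtain $\Phi_p(t,z;\vec M,c')\equiv\Phi_p(t,z)\prod_{i<j}(t_i^p-t_j^p)^{d_0}\prod_{i,s}(t_i^p-z_s^p)^{d_s}\pmod p$, from which the conclusion follows. The paper records only this last congruence and then says ``this formula implies the theorem''; you have spelled out the remaining bookkeeping (the expansion of the extra factor, the index shift in the $p$-integral, and the vanishing when some $l_i-\beta_i\le0$), which the paper leaves implicit.
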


For $l=1$ this statement follows from \cite[Theorem 3.1]{SlV}.

\begin{proof} 

Let $M,c$ be defined in \eqref{Mab}. Then  $c' = c+ d_0p$, $M_s=M+d_sp$, $s=1,\dots,n$,
where $d_0,\dots,d_n$ are nonnegative integers. Then we have modulo $p$,
\bea
\Phi_p(t,z; \vec M,c') \equiv 
\Phi_p(t,z) \prod_{1\leq i<j\leq r}(t_i^p-t_j^p)^{d_0}\prod_{i=1}^r\prod_{s=1}^n(t_i^p-z_s^p)^{d_s}
\pmod{p}.
\eea
This formula implies the theorem.
\end{proof}

\section{Pairs $(p,q)$ of type 1}
\label{sec 3}

\subsection{Numbers $M$ and $c$}

Let $p,q$ be prime numbers, $p>q$. 
Let $k$ be the minimal positive integer such that  $q\vert (k p -1)$.
We have $1\leq k <q$. 

\vsk.2>

We say that the pair $(p,q)$ is of type 1 or type 2  if
\bean
\label{pq}
1\leq k\leq q/2  \qquad\on{or}\qquad q/2 < k < q,
\eean
respectively.

\vsk.2>
The pairs $(p,2)$ are all of type 1.

The pairs $(p,3)$ are of type 1 if $p=3m+1$ and
of type 2 if $p=3m+2$.

\vsk.2>
Let $p=mq+s$,\, $s\in\{1,\dots,q-1\}$. Then 
the minimal positive integer $k$ such that  $q\vert (k p -1)$
belongs to $\{1,\dots,q-1\}$ and is determined by the property
$ks\equiv 1 \pmod{q}$. Hence 
half of the values of $s$ gives pairs
$(p,q)$ of type 1 and 
 half of the values of $s$ gives pairs
$(p,q)$ of type 2.

\vsk.2>
 In the rest of the paper {\it we always assume that $(p,q)$ is of type 1.}

\vsk.2>

\begin{lem}
The integer $q-2k$ is the least positive integer $m$ such that $q\vert (mp+2)$.

\end{lem}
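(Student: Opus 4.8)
The plan is to work purely with congruences modulo $q$ and the definition of $k$. Recall $k$ is the minimal positive integer with $q\mid(kp-1)$, i.e. $kp\equiv 1\pmod q$, and under the type 1 assumption $1\le k\le q/2$. The goal is to identify the least positive integer $m$ with $q\mid(mp+2)$, i.e. $mp\equiv -2\pmod q$.

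First I would observe that since $kp\equiv 1\pmod q$, multiplying by $-2$ gives $(-2k)p\equiv -2\pmod q$, so $m\equiv -2k\pmod q$. The candidate representative in the range $\{1,\dots,q-1\}$ is $q-2k$, provided that $q-2k$ is actually positive and is the \emph{smallest} positive solution. Positivity of $q-2k$ follows immediately from the type 1 hypothesis: since $1\le k\le q/2$ and $q$ is odd (for $q=2$ the statement degenerates and can be checked directly, but the interesting case is odd $q$), we have $2k\le q$, hence $q-2k\ge 0$; strictness, $q-2k>0$, needs $2k<q$, which holds because $q$ odd forces $k<q/2$ to be strict, i.e. $k\le(q-1)/2$, giving $q-2k\ge 1$.

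Next I would argue minimality. Any positive solution $m$ of $mp\equiv -2\pmod q$ satisfies $m\equiv -2k\equiv q-2k\pmod q$, so the positive solutions are exactly $q-2k,\ 2q-2k,\ 3q-2k,\dots$, an arithmetic progression with common difference $q$. The least among these is the first term $q-2k$, since all later terms exceed it by a positive multiple of $q$. Thus $q-2k$ is the least positive integer $m$ with $q\mid(mp+2)$, which is the claim.

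The main obstacle, and really the only subtle point, is verifying the strict positivity $q-2k\ge 1$ from the type 1 condition rather than merely $q-2k\ge 0$; this is where the inequality $1\le k\le q/2$ from \eqref{pq} is used, together with the parity of $q$. Once that is settled the minimality is a routine consequence of the fact that solutions of a single congruence modulo $q$ form one residue class. I would therefore present the proof as: reduce to the congruence $mp\equiv-2\pmod q$, multiply the defining relation $kp\equiv1$ by $-2$ to pin down the residue class, and then select its least positive representative, checking that $q-2k$ lies in $\{1,\dots,q-1\}$ by invoking type 1.
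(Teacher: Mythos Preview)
Your proof is correct and follows essentially the same route as the paper: both use $kp\equiv 1\pmod q$ to conclude $(q-2k)p\equiv -2\pmod q$ (the paper via the identity $qp-2(kp-1)=(q-2k)p+2$, you via multiplying by $-2$), and then pin down the range of $q-2k$. You are somewhat more explicit than the paper about minimality and about the degenerate case $q=2$ (where $q-2k=0$); the paper simply records $0\le q-2k\le q-2$ and leaves the rest to the reader.
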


\begin{proof} We have $q\vert (kp-1)$. Hence
$q\vert(qp - 2(kp-1))$, and  $qp - 2(kp-1)= (q-2k)p+2$. We also have $0\leq q-2k \leq q-2$.
\end{proof}

\begin{lem}
The integers 
\bean
\label{M c} 
M=\frac{kp-1}q\,,
\qquad
c =\frac{(q-2k)p+2}q 
\eean
are the least positive integers satisfying the congruences \eqref{Mab}.
The integer $c$ is odd and
\bean
\label{2Mc}
2M+c = p.
\eean
\qed
\end{lem}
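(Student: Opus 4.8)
The plan is to establish, in order, (i) that $M$ and $c$ are positive integers satisfying the two congruences in \eqref{Mab}, (ii) that they are the \emph{least} such, (iii) the identity $2M+c=p$, and (iv) the oddness of $c$ --- with steps (iii) and (iv) being almost immediate once (i) and (ii) are in place.

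For (i), I would first note that $M$ is an integer because $q\mid(kp-1)$ by the very definition of $k$, and that $c$ is an integer because the preceding lemma gives $q\mid\big((q-2k)p+2\big)$. The two congruences are then read off directly from $qM=kp-1\equiv-1$ and $qc=(q-2k)p+2\equiv2\pmod p$.

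For (ii), the key remark is that, since $\gcd(q,p)=1$, the integer solutions of $qx\equiv a\pmod p$ form a single residue class modulo $p$; as this class is nonzero whenever $a\not\equiv0\pmod p$ (which holds for $a=-1$ and $a=2$ because $p>q\ge2$ forces $p\ge3$), its least positive member is its unique representative in $\{1,\dots,p-1\}$. Thus it suffices to check the bounds $0<M<p$ and $0<c<p$. The bounds on $M$ are clear from $1\le k<q$ together with $p>q$. For $c$, the upper bound $c<p$ again follows from $k\ge1$, since then $(q-2k)p+2<qp$, while the lower bound $c>0$ is exactly where the standing assumption that $(p,q)$ is of type 1 enters: the inequality $k\le q/2$ gives $q-2k\ge0$ and hence $(q-2k)p+2>0$.

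Finally, (iii) is the one-line computation
\[
2M+c=\frac{2(kp-1)+(q-2k)p+2}{q}=\frac{qp}{q}=p,
\]
and (iv) follows at once: since $p$ is an odd prime, the relation $c=p-2M$ exhibits $c$ as odd. The only step that genuinely uses the hypotheses rather than routine bookkeeping is the lower bound $c>0$ in (ii), which is precisely where the type 1 condition is needed; everything else is direct verification.
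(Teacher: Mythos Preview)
Your argument is correct and is exactly the routine verification the paper has in mind; the lemma is stated with an immediate \qed\ and no proof in the paper, and your steps (i)--(iv) supply precisely the natural details, with the type~1 hypothesis entering only to ensure $c>0$, as you observe.
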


\begin{example} 
If $q=2$, then $M=\frac{p-1}2$, $c=1$. The case $q=2$ is the only case in which $c$ does not depend on $p$.

\end{example}

\begin{lem}
\label{lem skew}
For positive integers $(l_1,\dots,l_r)$ 
the polynomial $I^{(l_1,\dots,l_r)}(z)$ equals zero, if $l_1,\dots,l_r$ are not pairwise distinct.
\end{lem}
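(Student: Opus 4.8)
The plan is to exploit the antisymmetry of the integrand $\Phi_p(t,z)\,W(t,z)$ in the variables $t_1,\dots,t_r$. First I would determine how each factor transforms under the symmetric group $S_r$ permuting $t_1,\dots,t_r$. The Vandermonde-type factor $\prod_{1\leq i<j\leq r}(t_i-t_j)^c$ is antisymmetric, since $c$ is odd (by the lemma preceding this one, valid because $(p,q)$ is of type 1) and a transposition $t_i\leftrightarrow t_j$ multiplies this factor by $(-1)^c=-1$. The other factor $\prod_{i=1}^r\prod_{s=1}^n(t_i-z_s)^M$ of $\Phi_p(t,z)$ is symmetric in $t$, and each weight function $W_J(t,z)=\Sym_t[\,\cdot\,]$ is symmetric in $t$ by construction, so $W(t,z)=\sum_{J\in\Ik}W_J(t,z)V_J$ is symmetric in $t$ (the vectors $V_J$ are $t$-independent). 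Hence $\Phi_p(t,z)\,W(t,z)$ is antisymmetric in $t_1,\dots,t_r$: every transposition of two $t$-variables multiplies it by $-1$.

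Next I would recall that $\Phi_p(t,z)\,W(t,z)$ is a genuine polynomial in $t$ (the factors $(t_i-z_s)^M$ with $M\geq1$ cancel the denominators of the $W_J$), and that by definition $I^{(l_1,\dots,l_r)}(z)$ is the coefficient of $t_1^{l_1p-1}\cdots t_r^{l_rp-1}$ in this polynomial. Writing $\Phi_p W=\sum_\alpha c_\alpha(z)\,t^\alpha$, antisymmetry gives $c_{\tau\cdot\alpha}=-c_\alpha$ for every transposition $\tau$, where $\tau\cdot\alpha$ swaps the corresponding two entries of the exponent vector $\alpha$.

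Finally, if $l_i=l_j$ for some $i\neq j$, then the exponent vector $\alpha=(l_1p-1,\dots,l_rp-1)$ has equal $i$-th and $j$-th entries, so the transposition $\tau=(i\,j)$ fixes $\alpha$ and the relation above forces $c_\alpha=-c_\alpha$, i.e.\ $c_\alpha=0$; this is exactly $I^{(l_1,\dots,l_r)}(z)=0$. The argument is short, with no real obstacle; the only point demanding care is the parity of $c$, since it is precisely the oddness of $c$ (equivalently, the type~1 hypothesis) that makes the integrand antisymmetric rather than symmetric and thereby drives the vanishing.
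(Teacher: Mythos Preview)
Your argument is correct and follows exactly the same route as the paper: observe that $c$ is odd (by the type~1 assumption), conclude that $\Phi_p(t,z)W(t,z)$ is skew-symmetric in $t_1,\dots,t_r$, and infer the vanishing of the coefficient of $t_1^{l_1p-1}\cdots t_r^{l_rp-1}$ whenever two of the $l_i$ coincide. The paper's proof is simply a terser version of yours, stating only the first two steps and leaving the final inference implicit.
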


\begin{proof}
The integer $c$ is odd. Hence the polynomial $\Phi_p(t,z) W(t,z)$ is skew-symmetric with respect to permutations of 
$t_1,\dots, t_r$. 
\end{proof}

 \vsk.2>
  In what follows {\it we always assume}
 \bean
 \label{nqg}
n = qg+2r-1
\eean  
for some positive integer $g$.

\begin{lem}

Let $q, k, g$ be fixed. Let  $p$ be large enough so that 
\bean
\label{mgp}
M-g = \frac{kp-1}q-g =\frac {kp-1-gq}q \geq 0\,.
\eean
Let $l_1  >\cdots > l_r \geq 1$. Then the inequality 
\bean
\label{ne in}
kg+r-1 \geq l_1
\eean
 is necessary for  $I^{(l_1,\dots,l_r)}(z)$ to be nonzero. 
 Hence
there are at most $\binom{kg+r-1}{r}$ tuples $(l_1>\dots>l_r\geq 1)$ such that $I^{(l_1,\dots,l_r)}(z)$ is nonzero.

\end{lem}

\begin{proof}
For $i=1,\dots, r$ we have
\bean
\label{good}
\deg_{t_i} \Phi_p(t,z) W(t,z)
&=& nM + (r-1) c -1
\\
\notag
&=&
(gq+2r-1)\frac{kp-1}q+(r-1)\frac{(q-2k)p+2}q -1
\\
\notag
&=&
(kg+r-1) p - 1 + \frac {kp-1-gq}q  \,.
\eean
This proves the lemma.
\end{proof}

\subsection{Main Theorem}
\label{sec mth}

For a polynomial $F$ in some variables with integer coefficients, denote by 
$[F]_p$ the polynomial  $F$ whose integer coefficients are projected to $\F_p$.

Let 
\bea
f(z)
= \sum_{d_1,\dots,d_n} a_{d_1,\dots,d_n} z_1^{d_1}\dots z_n^{d_n}
\eea
 be a $W^{\ox n}[n-2r]$-valued
 polynomial in $z$ with integer coefficients. Here
 each  $a_{d_1,\dots,d_n}$ is a linear combination of basis vectors $\{ V_J\,|\, J\in\Ik\}$
with integer coefficients.  
Denote  by $[f(z)]_p$ the polynomial 
\bea
[f(z)]_p = \sum_{d_1,\dots,d_n} [a_{d_1,\dots,d_n}]_p \,z_1^{d_1}\dots z_n^{d_n}\,,
\eea
where each  $[a_{d_1,\dots,d_n}]_p$ is the linear combination $a_{d_1,\dots,d_n}$ 
whose integer coefficients are projected to $\F_p$.

Denote by $W^{\ox n}[n-2r]_p$ the vector space over $\F_p$ of  linear combinations of
symbols 
\linebreak
 $\{ V_J\,|\, J\in\Ik\}$
with coefficients in $\F_p$.
Define the subspace $\Sing W^{\ox n}[n-2r]_p \subset W^{\ox n}[n-2r]_p$ as the subspace of all vectors
$\sum_{J\in\Ik} c_JV_J$ whose coefficients  satisfy equations \eqref{ev=0}.

\begin{thm}
\label{thm li}

Let  $(p,q)$ be of type 1.
Let inequality \eqref{mgp} hold. Then  for  any $(l_1,\dots,l_r)$ such that $kg+r-1\geq l_1>\dots>l_r\geq 1$, 
the polynomial $[I^{(l_1,\dots,l_r)}(z)]_p$ is nonzero.
The polynomials
\bea
\{[I^{(l_1,\dots,l_r)}(z)]_p\mid kg+r-1\geq l_1>\dots>l_r\geq 1\}
\eea
are linear independent over $\F_p[z]$, that is, if
\bea
\sum_{kg+r-1 \geq l_1>\dots>l_r\geq 1} f_{l_1,\dots,l_r}(z) [I^{(l_1,\dots,l_r)}(z)]_p =0
\eea
for some $f_{l_1,\dots,l_r}(z) \in\F_p[z]$, then all $f_{l_1,\dots, l_r}(z)$ are equal to zero.

\end{thm}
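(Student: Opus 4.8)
The plan is to prove the nonvanishing and the linear independence simultaneously by a leading-term computation. Recall that the $p$-integral in \eqref{def I} is the extraction of the coefficient of $t_1^{l_1p-1}\cdots t_r^{l_rp-1}$ from $\Phi_p(t,z)W(t,z)$. Since the factors $(t_i-t_j)^c$, $(t_i-z_s)^M$, and the symmetrized $(t_i-z_{j_i})^{-1}$ in $W_J$ are each homogeneous in the union of the $t$- and $z$-variables, the polynomial $\Phi_p(t,z)W_J(t,z)$ is homogeneous, and hence each $I^{(l_1,\dots,l_r)}(z)$ is homogeneous in $z$ of a degree depending only on $\sum_i l_i$, as one reads off from \eqref{good}. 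First I would reduce the assertion to linear independence over the fraction field: a vanishing $\F_p[z]$-combination is equivalent, after clearing denominators, to a vanishing $\F_p(z)$-combination, so it suffices to show that the vectors $v_l:=[I^{(l_1,\dots,l_r)}(z)]_p$, viewed through their $V_J$-components as elements of $\F_p(z)^{\,\Ik}$, are linearly independent over $\F_p(z)$. For this I would exhibit a square submatrix, indexed by the admissible tuples and by a suitable injective choice $l\mapsto J(l)\in\Ik$ of components, whose determinant is a nonzero polynomial.

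Next I would compute the relevant coefficients modulo $p$. By Lemma \ref{lem skew} only strictly decreasing tuples occur, and the skew-symmetry of $\Phi_p W_J$ in $t$ writes the coefficient of $t_1^{l_1p-1}\cdots t_r^{l_rp-1}$ as an alternating sum over $S_r$, so it is enough to expand one ordered term and antisymmetrize. Fixing the lexicographic order $z_1\succ\cdots\succ z_n$, I would expand each factor $(t_i-z_s)^M$ and $(t_i-t_j)^c$ into monomials and collect those of total $t_i$-degree $l_ip-1$. The arithmetic input is Lucas' theorem: writing $l_ip-1=(l_i-1)p+(p-1)$ in base $p$ and using $M<p$, $c<p$ together with the identity $2M+c=p$ from \eqref{2Mc} and the type-1 hypothesis $1\le k\le q/2$, I would determine precisely which monomial distributions yield binomial coefficients $\binom{M}{\cdot}$ and $\binom{c}{\cdot}$ that survive in $\F_p$. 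This should single out, for each tuple, one component $J(l)$ and one dominant $z$-monomial in $v_l[J(l)]$ with nonzero coefficient, the bound \eqref{ne in} being exactly what guarantees that enough of the $n=qg+2r-1$ factors $(t_i-z_s)^M$ remain to realize this monomial when $l_1\le kg+r-1$.

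With the leading data in hand, I would order the admissible tuples and prove that the square matrix $\big(v_{l'}[J(l)]\big)$ is triangular with nonzero diagonal entries: the diagonal entries $v_l[J(l)]$ carry the dominant monomials just computed, while for tuples $l'$ after $l$ the entry $v_{l'}[J(l)]$ either vanishes or has strictly smaller leading monomial in the chosen order. Its determinant is then the product of the diagonal entries up to lower-order terms, hence a nonzero polynomial, which gives the $\F_p(z)$-linear independence and, taking $l'=l$, the nonvanishing of each $[I^{(l_1,\dots,l_r)}(z)]_p$.

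The main obstacle is the second step: carrying out the coefficient extraction precisely enough to isolate a single surviving leading monomial and component for each tuple. Many monomial distributions contribute to the coefficient of $t_i^{l_ip-1}$, and one must show that modulo $p$ all but one either cancel by antisymmetrization or vanish by Lucas' theorem. I expect the hypotheses $2M+c=p$, the inequality $M-g\ge 0$ from \eqref{mgp}, and the type-1 constraint $1\le k\le q/2$ to be exactly what forces the decisive binomial coefficients to be nonzero in $\F_p$ and makes the resulting matrix triangular; proving this cleanly, rather than through a brute-force multinomial expansion, is the crux of the argument.
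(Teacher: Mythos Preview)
Your overall strategy---isolate a leading $z$-monomial for each admissible tuple, track a distinguished component index $J(l)$, and deduce linear independence from the injectivity $l\mapsto J(l)$---matches the paper's, but your identification of the key arithmetic tool is off. Lucas' theorem plays no role. The paper's organization avoids a full multinomial expansion: it expands \emph{only} the factor $\prod_{i<j}(t_i-t_j)^c$ into summands indexed by tuples $(a_{ij})$ with $a_{ij}+a_{ji}=c$, leaving $\prod_{i,s}(t_i-z_s)^M W_J(t,z)$ intact. The single ``special summand'' with $a_{ij}=c$ for $i<j$ (hence $t_i$-degree $(r-i)c$ from this factor) is shown to produce the lexicographically largest $z$-monomial $z^{l_1,\dots,l_r}=\prod_i\prod_{u\le nM+(r-i)c-l_ip} z(u)$; every other summand has strictly smaller $t_1$-degree, hence smaller leading monomial (Lemma~\ref{lem lea}). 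In that special summand, after extracting the coefficient of $t_i^{l_ip-1}$ from $t_i^{(r-i)c}\prod_s(t_i-z_s)^M\cdot(t_i-z_{m_i})^{-1}$, the leading coefficient at the basis vector $V_{\{m_1,\dots,m_r\}}$ is the product in \eqref{leaco}, a product of binomials $\binom{M-1}{d_i}$ with $0\le d_i\le M-1$ by the very definition \eqref{m} of the $m_i$'s. These binomials are nonzero \emph{as integers}, so no $p$-adic subtlety arises; the only place the type-1 hypothesis enters is through $2M+c=p$ and the parity of $c$, which are used to get the inequalities in Lemmas~\ref{lem ine} and~\ref{lem difl}. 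The injectivity $l\mapsto\{m_1,\dots,m_r\}$ is Lemma~\ref{lem difl}: if $l_i>l_i'$ then $m_i'>m_i+1$, again via $2M+c=p$. With these three lemmas in hand, the linear-independence argument is immediate: the leading term of $f_l(z)[I^{(l)}(z)]_p$ has leading index $\{m_1,\dots,m_r\}$ depending only on $l$, and distinct $l$'s give distinct indices, so the leading terms cannot cancel.

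In short, the ``clean'' route you anticipate exists and is exactly what the paper does: separate the $(t_i-t_j)^c$ expansion from the rest, pick out the extremal summand by degree in $t_1,t_2,\dots$ successively, and observe that the surviving binomial is $\binom{M-1}{\cdot}$ with argument forced into $[0,M-1]$ by the defining inequalities for $m_i$. Your plan to invoke Lucas on $\binom{M}{\cdot}$ and $\binom{c}{\cdot}$ would work in principle but is unnecessary and obscures why the leading coefficient is nonzero.
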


The theorem is proved in Section \ref{sec proof}.

\begin{cor}
\label{cor rank}

Let  $(p,q)$ be of type 1.
Let inequality \eqref{mgp} hold.  Then the $\F_p[z^p]$-module $\mc M/(\sim)$ is free of rank $\binom{kg+r-1}{r}$
with a basis
$[I^{(l_1,\dots,l_r)}(z)]_p$, where
\linebreak
$kg+r-1\geq l_1>\dots>l_r\geq 1$.
\qed

\end{cor}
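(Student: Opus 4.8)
The plan is to prove both the nonvanishing and the independence in one stroke, by reducing to the nonvanishing of a single maximal minor and then evaluating at one well-chosen point. A finite set of vectors in the free module $\Sing W^{\ox n}[n-2r]_p\ox\F_p[z]$ is linearly independent over $\F_p[z]$ if and only if it is independent over the fraction field $\F_p(z)$ (a relation over $\F_p(z)$ clears denominators), and over $\F_p(z)$ this holds if and only if the coordinate matrix (in the basis $\{V_J\}$) has some $\binom{kg+r-1}{r}\times\binom{kg+r-1}{r}$ minor that is a nonzero polynomial in $z$. Since a nonzero polynomial over the infinite field $\overline{\F_p}$ has a non-root, it suffices to exhibit one point $z=\zeta\in\overline{\F_p}{}^{\,n}$ at which the vectors $[I^{(l_1,\dots,l_r)}(\zeta)]_p$, for $kg+r-1\ge l_1>\dots>l_r\ge1$, are linearly independent over $\overline{\F_p}$. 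This reduction also yields the individual nonvanishing, a single nonzero vector being the $1\times1$ instance.

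The device I would use is the specialization $z_s=\omega^s$, where $\omega$ is a primitive $n$-th root of unity (available whenever $p\nmid n$, in particular for $p$ large). Its point is that $\prod_{s=1}^n(t_i-z_s)=t_i^n-1$, so each single-variable factor of the master polynomial becomes $h_a(t):=(t^n-1)^M/(t-\omega^a)$, and a direct expansion using $\omega^{an}=1$ gives the \emph{separated} formula
\[
[t^m]\,h_a(t)=\omega^{-a(m+1)}D_m,\qquad D_m=(-1)^{M+j_0}\binom{M-1}{\,j_0-1\,},\quad j_0=\Big\lceil\tfrac{m+1}{n}\Big\rceil,
\]
in which the dependence on the root $\omega^a$ and on the degree $m$ have completely decoupled, and $D_m\neq0$ in $\F_p$ exactly when $1\le j_0\le M$, which (as $M<p$) is Lucas' theorem for $\binom{M-1}{j_0-1}$. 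For $r=1$ this already closes the argument: from $\Phi_p W|_\zeta=\sum_a h_a(t)V_{\{a\}}$ every term in the coefficient of $t^{lp-1}$ carries the same power $\omega^{-alp}$, so $[I^{(l)}(\zeta)]_p=D_{lp-1}\sum_{a=1}^n\omega^{-alp}V_{\{a\}}$. The Fourier modes $u_l:=\sum_a\omega^{-alp}V_{\{a\}}$ lie in $\Sing$ (their coordinate sum vanishes since $n\nmid lp$) and are independent for $1\le l\le kg<n$ because $l\mapsto lp\bmod n$ is injective there; and $D_{lp-1}\neq0$ because $Mn=kgp+(M-g)$ forces $j_0=\lceil lp/n\rceil\le M$, where $M-g\ge0$ is \eqref{mgp} and $l\le kg$ is \eqref{ne in}. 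This recovers \cite{SlV}.

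For general $r$ the same specialization turns $[I^{(l_1,\dots,l_r)}(\zeta)]_p$ into an explicit character sum: extracting the coefficient of $\prod_it_i^{l_ip-1}$ from $\prod_{i<j}(t_i-t_j)^c\,\Sym_t\!\big[\prod_ih_{a_i}(t_i)\big]$ and summing over the ordered tuples attached to each set $J$ produces, in the $V_J$-coordinate, a symmetric combination of characters in $\{\omega^a:a\in J\}$ whose coefficients are products $\prod_iD_{\,l_ip-1-b_i}$, with $\mathbf b$ running over the exponent vectors of $\prod_{i<j}(t_i-t_j)^c$. I would then fix a term order, show that each admissible tuple $(l_1,\dots,l_r)$ determines a \emph{leading} character, that these leading characters are pairwise distinct modulo $n$, and that the accompanying coefficient is a product of binomials $\binom{M-1}{\cdot}$ nonzero mod $p$; triangularity in the (linearly independent) monomial symmetric functions of the $\omega^a$ then gives the desired independence. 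The identity $2M+c=p$ from \eqref{2Mc}, which holds precisely because $(p,q)$ is of type $1$, is what keeps $c$ a genuine positive exponent and controls the degree bookkeeping \eqref{good} bounding the range of admissible $\mathbf b$.

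The hard part is exactly this last step. Because of the factor $\prod_{i<j}(t_i-t_j)^c$, the vector $[I^{(l_1,\dots,l_r)}(\zeta)]_p$ is \emph{not} a clean Slater determinant of the modes $u_{l_i}$ — already for $q=2$, $c=1$, $r=2$ the $V_{\{a,b\}}$-coordinate is a genuine two-term symmetric combination $D_{l_1p-2}D_{l_2p-1}\,m_{\{l_1p-1,\,l_2p\}}-D_{l_1p-1}D_{l_2p-2}\,m_{\{l_1p,\,l_2p-1\}}$ rather than a single antisymmetrization. The real work is therefore to isolate, among the many shifted characters produced by expanding $\prod_{i<j}(t_i-t_j)^c$, one dominant character per tuple that both survives modulo $p$ (via Lucas and \eqref{mgp}, \eqref{ne in}) and separates the tuples. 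A secondary technicality is the case $p\mid n$, where the $n$-th roots of unity are not distinct; this I would treat separately, either by checking that the relevant minor is a Vandermonde-type expression whose nonvanishing is a polynomial identity stable under reduction, or by an alternative choice of specialization.
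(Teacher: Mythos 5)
Your overall strategy is genuinely different from the paper's: the paper never specializes $z$, but instead extracts the lexicographically leading monomial in $z_1,\dots,z_n$ of each $[I^{(l_1,\dots,l_r)}(z)]_p$ directly (Lemma \ref{lem lea}), shows that the leading coefficient is a vector whose leading index $\{m_1,\dots,m_r\}$ is pinned down by the inequalities \eqref{m}, and then separates the tuples by Lemma \ref{lem difl}, which shows distinct tuples give distinct leading indices; independence over $\F_p[z]$ follows because multiplying by $f_{l_1,\dots,l_r}(z)$ does not change the leading index. Your reduction to a single specialization point $\zeta\in\overline{\F_p}^{\,n}$ is legitimate (independence over $\F_p[z]$ is equivalent to independence over $\F_p(z)$, hence to nonvanishing of a maximal minor, and a nonzero polynomial over an infinite field has a non-root), and your $r=1$ computation at $z_s=\omega^s$ is a correct and clean recovery of the \cite{SlV} picture.

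However, for $r\ge 2$ the proof has a genuine gap, and you name it yourself: after expanding $\prod_{i<j}(t_i-t_j)^c$, each coordinate of $[I^{(l_1,\dots,l_r)}(\zeta)]_p$ is a sum over exponent vectors $\mathbf b$ of shifted characters with coefficients $\prod_i D_{l_ip-1-b_i}$, and you never produce the term order, the identification of a dominant character per tuple, the proof that these dominant characters are pairwise distinct modulo $n$, or the Lucas-type nonvanishing of the accompanying product of binomials. This is not a routine verification: it is precisely the analogue of the paper's ``special summand'' analysis (the collection $a_{ij}=c$ for $i<j$ in \eqref{smn}), and the specialization $z_s=\omega^s$ destroys the lexicographic structure in $z$ that makes the paper's dominance argument work, so you cannot simply import it. Until that triangularity step is carried out, neither the nonvanishing of $[I^{(l_1,\dots,l_r)}(\zeta)]_p$ nor the independence is established for $r\ge 2$. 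A secondary unresolved point is the case $p\mid n$ (which the hypothesis \eqref{mgp} does not exclude, e.g.\ $q=2$, $n=2g+2r-1$ with $p$ between $2g+1$ and $n$), where your specialization is unavailable; the two fallback options you mention are not developed.
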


\section{Leading term and leading index}
\label{sec 4}

\subsection{Leading term of a polynomial}

Denote by $\succ$  the lexicographical ordering of monomials
$z_1^{d_1}\dots z_n^{d_n}$,\  where $d_1,\dots,d_n\in\Z_{\geq 0}$\,.
  Thus $z_{1} \succ  z_2 \succ \dots \succ z_{n-1}\succ z_n$ and so on.
For example, $z_1^2z_2^2 z_3^2 \succ z_1^2z_2z_4^5$.

For a nonzero polynomial
\bea
f(z)=\sum_{d_1,\dots,d_n} a_{d_1,\dots,d_n} z_1^{d_1}\dots z_n^{d_n}\,
\eea
consider  the summand
$a_{d_1,\dots,d_n} z_1^{d_1}\dots z_n^{d_n}$ corresponding to the
lexicographically  largest
monomial $z_1^{d_1}\dots z_n^{d_n}$ entering   $f(z)$ with a nonzero coefficient
$a_{d_1,\dots,d_n}$. We call this summand
 the leading term of $f(z)$, the corresponding 
$a_{d_1,\dots,d_n}$ --\, the leading coefficient, and
$z_1^{d_1}\dots z_n^{d_n}$ --\, the leading monomial.

\vsk.2>

For example, let $I^{(l_1,\dots,l_r)}(z)$ be one of the
$p$-hypergeometric solutions of Theorem \ref{thm sv2}.
Then 
\bea
[I^{(l_1,\dots,l_r)}(z)]_p = \sum_{d_1,\dots,d_n} [a_{d_1,\dots,d_n}]_p\, z_1^{d_1}\dots z_n^{d_n}\,,
\eea
with  $[a_{d_1,\dots,d_n}]_p\in \Sing W^{\ox n}[n-2r]_p$. 
If $[I^{(l_1,\dots,l_r)}(z)]_p$ is nonzero, then it 
has the leading term, monomial, and coefficient. The leading coefficient is a nonzero vector 
of $\Sing W^{\ox n}[n-2r]_p$.

\subsection{Leading index of a vector of $W^{\ox n}[n-2r]_p$}

We  order $\{ V_J\,|\, J\in\Ik\}$, the basis vectors  of $W^{\ox n}[n-2r]_p$, lexicographically 
with 
the largest of them being  
$V_{\{1,2,\dots,r\}}=w_2\ox \dots \ox w_2\ox w_1\ox \dots \ox w_1$
and the smallest of them being
$V_{\{n-r+1,n-r+2,\dots,n\}}=w_1\ox \dots \ox w_1\ox w_2\ox \dots \ox w_2$.

\vsk.2>

Every nonzero vector $w\in W^{\ox n}[n-2r]_p$ is a linear combination of the basis vectors. 
Let $V_{\{m_1,\dots,m_r\}}$ be the largest of the basis vectors entering $w$  with a nonzero coefficient.  
We call the 
index  $\{m_1,\dots,m_r\}$  the leading index of $w$.

\subsection{Index $\{m_1,\dots,m_r\}$}

Given integers $(l_1,\dots,l_r)$, 
consider the system of inequalities for integers $m_1, \dots, m_r$,
\bean
\label{m}
( m_i-1) M \ \leq  \ nM + (r-i)c-l_ip\  < \  m_i M ,
\qquad i=1,\dots,r .
\eean
Clearly, these inequalities uniquely determine the integers $m_1, \dots, m_r$.

\begin{lem}

\label{lem ine}
Let inequality \eqref{mgp} hold. Let $kg+r-1\geq l_1>\dots>l_r\geq 1$. Then 
  $m_{i} +2\leq m_{i+1}$ for
$i=1,\dots, r-1$, and $1\leq m_1$, $m_r <n$.

\end{lem}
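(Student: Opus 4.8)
The plan is to solve the defining inequalities \eqref{m} for the $m_i$ explicitly and then read off the three assertions from elementary estimates. Writing $X_i := nM+(r-i)c-l_ip$, the double inequality $(m_i-1)M\le X_i<m_iM$ together with $M>0$ says precisely that $m_i-1=\lfloor X_i/M\rfloor$, so
\[
m_i=\Big\lfloor \tfrac{X_i}{M}\Big\rfloor+1,\qquad i=1,\dots,r.
\]
Thus everything reduces to controlling the quantities $X_i$ and their successive differences, and the three claims become statements about $X_{i+1}-X_i$, about $X_1$, and about $X_r$.

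For the gap estimate $m_i+2\le m_{i+1}$, I would compute
\[
X_{i+1}-X_i=(l_i-l_{i+1})p-c .
\]
Since $l_1>\dots>l_r$ are strictly decreasing integers we have $l_i-l_{i+1}\ge 1$, whence $X_{i+1}-X_i\ge p-c=2M$ by the relation \eqref{2Mc}. The elementary fact that $\lfloor (a+2M)/M\rfloor=\lfloor a/M\rfloor+2$ then gives $\lfloor X_{i+1}/M\rfloor\ge\lfloor X_i/M\rfloor+2$, i.e. $m_{i+1}\ge m_i+2$.

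For the endpoint claims I would use the degree identity \eqref{good}, which may be rewritten as $nM+(r-1)c=(kg+r-1)p+(M-g)$. Hence
\[
X_1=(kg+r-1-l_1)\,p+(M-g)\ge 0,
\]
the two summands being nonnegative by the hypothesis $l_1\le kg+r-1$ and by assumption \eqref{mgp}; this yields $m_1=\lfloor X_1/M\rfloor+1\ge 1$. For the upper endpoint, $X_r=nM-l_rp$, and since $l_r\ge 1$ and $p=2M+c>M$ we obtain $X_r\le nM-p<(n-1)M$, so $\lfloor X_r/M\rfloor\le n-2$ and $m_r\le n-1<n$.

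The argument is entirely elementary, so there is no serious obstacle beyond bookkeeping. The only point demanding care is that the three key inputs — the relation $2M+c=p$ of \eqref{2Mc}, the degree computation \eqref{good}, and the positivity \eqref{mgp} — are exactly what makes the floor estimates close with the correct strict versus non-strict inequalities; in particular one must note $p>M$ (immediate from $p=2M+c$ with $M,c>0$) to obtain the strict bound $X_r<(n-1)M$.
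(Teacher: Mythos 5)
Your proof is correct and follows essentially the same route as the paper's: the gap estimate comes from $X_{i+1}-X_i\ge p-c=2M$ via \eqref{2Mc}, the bound $m_1\ge 1$ from $X_1\ge 0$ using \eqref{good} and \eqref{mgp}, and $m_r<n$ from $X_r\le nM-p<(n-1)M$. Recasting the defining inequalities as $m_i=\lfloor X_i/M\rfloor+1$ is only a cosmetic repackaging of the paper's direct manipulation of \eqref{m}; all the substantive inputs are identical.
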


\begin{proof} 

For $i=1,\dots, r-1$ we have 
$(m_i -1)M\leq nM +(r-i)c -l_ip =
nM +(r-i-1)c+c -l_{i+1}p + (l_{i+1}-l_i)p 
\leq 
nM +(r-i-1)c+c -l_{i+1}p -p
=
nM +(r-i-1)c -l_{i+1}p -2M
< (m_{i+1}-2)M$.
Hence $m_{i+1}> m_i+1$. 
 This implies that
$m_{i} +2\leq m_{i+1}$.

The inequality $1\leq m_1$ follows from the inequality
$l_1p \leq nM+(r-1)c$, which is true since
\bea
l_1p \leq (kg+r-1) p  \leq (kg+r-1) p + \frac {kp-1-gq}q = 
nM + (r-1) c,
\eea
see \eqref{good}. We also have  $ M+2c = p \leq l_r p \leq (n-m_r+1)M$.
Hence $m_r < n$.
\end{proof}

\begin{lem}
\label{lem difl}
Let $(l_1,\dots,l_r)$ and $(l_1',\dots,l_r')$ be two distinct tuples of integers
 such that $kg+r-1\geq l_1>\dots>l_r\geq 1$
and $kg+r-1\geq l_1'>\dots>l_r'\geq 1$. Let 
$\{m_1, \dots, m_r\}$ and $\{m_1', \dots, m_r'\}$
be the corresponding sets defined by \eqref{m}.
Then $\{m_1, \dots, m_r\}\ne\{m_1', \dots, m_r'\}$.

\end{lem}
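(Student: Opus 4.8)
The plan is to prove that the assignment $(l_1,\dots,l_r)\mapsto\{m_1,\dots,m_r\}$ defined by \eqref{m} is injective on the set of admissible tuples $kg+r-1\ge l_1>\dots>l_r\ge1$. First I would record the explicit formula hidden in \eqref{m}: the condition $(m_i-1)M\le nM+(r-i)c-l_ip<m_iM$ places the quotient $\big(nM+(r-i)c-l_ip\big)/M$ in the half-open interval $[m_i-1,m_i)$, whence
\[
m_i=\Big\lfloor\frac{nM+(r-i)c-l_ip}{M}\Big\rfloor+1 .
\]
This exhibits each $m_i$ as a function $\mu(i,l_i)$ of the index $i$ together with the single entry $l_i$.

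The argument then rests on two ingredients. First, for an admissible tuple the numbers are strictly increasing, $m_1<m_2<\dots<m_r$; this is precisely the content of Lemma \ref{lem ine}, which I would simply cite. It guarantees that the unordered set $\{m_1,\dots,m_r\}$ has exactly $r$ elements and that its $i$-th smallest element equals $m_i$, so the set remembers the ordered tuple $(m_1<\dots<m_r)$ and, in particular, the index $i$ attached to each value. Second, for each fixed $i$ the map $l\mapsto\mu(i,l)$ is strictly monotone, hence injective. Here I would invoke the key identity \eqref{2Mc}, $p=2M+c$, which forces $p>M$ since $M,c\ge1$. Increasing $l$ by one unit decreases the numerator $nM+(r-i)c-lp$ by $p>M$, and the elementary estimate ``$a-b\ge M\Rightarrow\lfloor a/M\rfloor\ge\lfloor b/M\rfloor+1$'' shows $\mu(i,l)$ strictly decreases; thus distinct values of $l_i$ yield distinct values of $m_i$.

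With these two facts the conclusion is immediate. Suppose two admissible tuples $(l_1,\dots,l_r)$ and $(l_1',\dots,l_r')$ produce the same set $\{m_1,\dots,m_r\}=\{m_1',\dots,m_r'\}$. By Lemma \ref{lem ine} both collections sorted in increasing order are $(m_1<\dots<m_r)$ and $(m_1'<\dots<m_r')$, so equality of the sets forces $m_i=m_i'$ for every $i$. Since $m_i=\mu(i,l_i)$ and $m_i'=\mu(i,l_i')$, the injectivity of $\mu(i,\cdot)$ gives $l_i=l_i'$ for all $i$, so the tuples coincide, proving the contrapositive of the lemma.

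I expect the only step needing genuine care to be the strict monotonicity of $l\mapsto\mu(i,l)$: one must be certain that a single unit change in $l_i$ always moves $m_i$, and this hinges entirely on the relation $p=2M+c$ and the resulting strict inequality $p>M$. Everything else is bookkeeping, with the ordering structure of $\{m_1,\dots,m_r\}$ supplied directly by Lemma \ref{lem ine}.
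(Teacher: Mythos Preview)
Your proposal is correct and follows essentially the same approach as the paper: both reduce set equality to coordinate-wise equality $m_i=m_i'$ via the ordering from Lemma~\ref{lem ine}, and both use the identity $p=2M+c$ to show that the map $l_i\mapsto m_i$ is injective for each fixed $i$. The only cosmetic differences are that you phrase the injectivity via the floor-function formula and the estimate $p>M$, whereas the paper runs a direct inequality chain yielding the slightly sharper conclusion $m_i'\ge m_i+2$ when $l_i>l_i'$; and you make the invocation of Lemma~\ref{lem ine} explicit, while the paper leaves it tacit.
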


\begin{proof}
Let $l_i>l_i'$ for some $i$. Then
$(m_i -1)M\leq nM +(r-i)c -l_ip =
nM +(r-i)c -l_i'p + (l_i'-l_i)p 
\leq 
nM +(r-i)c -l_i'p -p
=
nM +(r-i)c -l_i'p -2M -c
<
nM +(r-i)c -l_i'p -2M 
< (m_i'-2)M$.
Hence $m_i'> m_i+1$. 
\end{proof}

\subsection{Special summand}

We have
\bea
\notag
\prod_{1\leq i<j\leq r} (t_i-t_j)^c 
&=&
\prod_{1\leq i<j\leq r}
\Big(\sum_{a_{ij}+a_{ji}=c}\binom{c}{a_{ji}}(-1)^{a_{ji}}t_i^{a_{ij}}t_j^{a_{ji}}\Big)
\\
&=&
\sum_{(a_{ij})\in A}\,  t_1^{\sum_{j\ne 1}a_{1j}}\cdots t_r^{\sum_{j\ne r}a_{rj}}
\prod_{i<j} (-1)^{a_{ji}}\binom{c}{a_{ji}},
\eea
where $A=\{(a_{ij})_{1\leq i,j \leq r, \,i\ne j}\mid a_{ij}\in\Z_{\geq 0},\,a_{ij}+a_{ji}=c \ \on{for\, every\,}\, i\ne j\}$.
Hence
\bean
\label{smn}
&&
[I^{(l_1,\dots,l_r)}(z)]_p = [\int_{\{l_1,\dots,l_r\}_p} \Phi_p(t,z) W(t,z) dt_1\dots dt_r]_p\,=
\\
\notag
&&
=\sum_{(a_{ij})\in A}\, \Big(\prod_{i<j} (-1)^{a_{ji}}\binom{c}{a_{ji}}\Big)
[\int_{\{l_1,\dots,l_r\}_p} W(t,z) \prod_{i=1}^r t_i^{\sum_{j\ne i}a_{ij}}\prod_{s=1}^n(t_i-z_s)^M
dt_1\dots dt_r]_p.
\eean
This sum contains the special summand
\bea
S(z):=[\int_{\{l_1,\dots,l_r\}_p} W(t,z) \prod_{i=1}^r t_i^{(r-i)c}\prod_{s=1}^n(t_i-z_s)^Mdt_1\dots dt_r]_p
\eea
corresponding to the collection $(a_{ij})$ such that $a_{ij}=c$ for $i<j$ and $a_{ij}=0$ for $i>j$.

\smallskip
For $1\leq u \leq nM$, denote 
\bean
\label{z(i)}
z(u) : = z_s\,,\qquad \on{if} \quad (s-1)M< u\leq sM.
\eean
Thus $z(1)= z(2)=\dots = z(M) = z_1$, $z(M+1) = z_2$ and so on.

\begin{lem}
\label{lem lea}
The vector-polynomial $S(z)$ is nonzero. The leading monomial of $S(z)$ equals
\bean
\label{lts}
z^{l_1,\dots,l_r}\,:= \prod_{i=1}^r\prod_{u=1}^{nM+(r-i)c-l_ip}z(u).
\eean
Denote by $C_{l_1,\dots,l_r}$ the leading coefficient of $S(z)$. Then 
 the leading index of the vector  $C_{l_1,\dots,l_r}$ equals $\{m_1,\dots,m_r\}$ determined by inequalities \eqref{m}.
Moreover, the leading monomial of any other nonzero summand in \eqref{smn} is lexicographically smaller than
$z^{l_1,\dots,l_r}$.
Thus $C_{l_1,\dots,l_r}z^{l_1,\dots,l_r}$ is the leading term of 
$[I^{(l_1,\dots,l_r)}(z)]_p$.

\end{lem}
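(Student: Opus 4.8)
The plan is to reduce everything to a one–variable coefficient computation by exploiting that, in the special summand, the factor multiplying $W(t,z)$ factors over the $t_i$. Writing $\prod_{s=1}^n(t-z_s)^M=\prod_{u=1}^{nM}(t-z(u))$ with the notation \eqref{z(i)}, the special integrand is $\big(\sum_J W_J(t,z)V_J\big)\prod_{i=1}^r P_i(t_i)$ with $P_i(t)=t^{(r-i)c}\prod_{u=1}^{nM}(t-z(u))$. First I would expand $W_J=\Sym_t\big[\prod_i 1/(t_i-z_{j_i})\big]=\sum_{\tau\in S_r}\prod_i 1/(t_i-z_{j_{\tau(i)}})$ and observe that, since the rest of the integrand is a product of one–variable functions, the $p$-integral (extraction of the coefficient of $t_1^{l_1p-1}\cdots t_r^{l_rp-1}$) factors as well. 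This yields that the $V_J$-coefficient of $S(z)$ equals $\big[\sum_{\tau\in S_r}\prod_{i=1}^r Q_i^{(j_{\tau(i)})}(z)\big]_p$, where $Q_i^{(j)}(z)$ is the coefficient of $t^{l_ip-1}$ in $t^{(r-i)c}\prod_{u=1}^{nM}(t-z(u))/(t-z_j)$.

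Next I would compute $Q_i^{(j)}$. Cancelling the single factor $(t-z_j)$, one gets $Q_i^{(j)}=(-1)^{N_i}e_{N_i}$, the $N_i$-th elementary symmetric polynomial in the multiset $\{z(u)\}$ with one copy of $z_j$ removed, where $N_i=nM+(r-i)c-l_ip$ is its $z$-degree. Since $z(1)\succeq z(2)\succeq\cdots$, the lexicographically largest monomial of $e_{N_i}$ comes from the earliest slots. Writing $N_i=(m_i-1)M+\rho_i$ with $0\le\rho_i<M$, where $m_i$ is exactly the integer pinned by \eqref{m}, the key point is that deleting one copy of $z_j$ lowers this top monomial iff $j<m_i$. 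Thus if $j\ge m_i$ the leading monomial of $Q_i^{(j)}$ is $\prod_{u=1}^{N_i}z(u)$, with leading coefficient $(-1)^{N_i}\binom{M-1}{\rho_i}$ when $j=m_i$ and $(-1)^{N_i}\binom{M}{\rho_i}$ when $j>m_i$; if $j<m_i$ the leading monomial is strictly smaller. All these binomial coefficients are nonzero in $\F_p$ because $\rho_i\le M-1<p$.

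Then I would assemble the global leading term. As lex order is multiplicative, the leading monomial of $\prod_i Q_i^{(j_{\tau(i)})}$ is the product of the one–variable leading monomials, hence equals $z^{l_1,\dots,l_r}=\prod_i\prod_{u=1}^{N_i}z(u)$ iff $j_{\tau(i)}\ge m_i$ for all $i$, and is strictly smaller otherwise. By Lemma \ref{lem ine} the $m_i$ are strictly increasing, so for $J=\{j_1<\dots<j_r\}$ such a $\tau$ exists iff $j_i\ge m_i$ for all $i$ (a greedy/Hall matching argument). Consequently $z^{l_1,\dots,l_r}$ is the leading monomial of $S(z)$, and its coefficient $C_{l_1,\dots,l_r}$ is a combination of the $V_J$ with $j_i\ge m_i$ for all $i$. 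In the ordering of $\{V_J\}$ the largest admissible basis vector is $V_{\{m_1,\dots,m_r\}}$ (the smallest admissible index set), and for $J=\{m_1,\dots,m_r\}$ only $\tau=\id$ can produce $z^{l_1,\dots,l_r}$, since $m_{\tau(i)}\ge m_i$ for all $i$ forces $\tau=\id$; this gives the coefficient $\pm\prod_i\binom{M-1}{\rho_i}\ne0$ in $\F_p$. Hence $C_{l_1,\dots,l_r}\ne0$ (so $S(z)\ne0$) and its leading index is exactly $\{m_1,\dots,m_r\}$.

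Finally, to conclude that $C_{l_1,\dots,l_r}z^{l_1,\dots,l_r}$ is the leading term of the whole of $[I^{(l_1,\dots,l_r)}(z)]_p$, I must show every other summand in \eqref{smn} has leading monomial $\prec z^{l_1,\dots,l_r}$. A summand with $t$-exponents $(e_i)$, $\sum_i e_i=\binom r2 c$, has leading monomial $\prod_i\prod_{u=1}^{b_i+e_i}z(u)$ with $b_i:=nM-l_ip$; all summands share the same total $z$-degree, so this is a genuine lexicographic comparison. I expect this to be the main obstacle. The approach is an exchange (front-loading) argument: transferring one unit from $e_j$ to $e_i$ with $i<j$ multiplies the leading monomial by $z(b_i+e_i+1)/z(b_j+e_j)$, which is $\succeq 1$ as long as $b_i+e_i<b_j+e_j$; this ordering is maintained throughout because $b_{i+1}-b_i=(l_i-l_{i+1})p\ge p$ dominates the exponent increments $c=p-2M<p$. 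Thus the special pattern $e_i=(r-i)c$, which is the unique maximal front-loading and the only $(a_{ij})\in A$ realizing it, is the strict lexicographic maximum. The delicate bookkeeping is to verify that at least one transfer crosses a block boundary $((s-1)M,sM]$, yielding the strict inequality $\prec$ for every genuinely different summand.
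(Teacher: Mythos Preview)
Your first three parts (factoring the special integrand over the $t_i$, identifying $Q_i^{(j)}$ as $(-1)^{N_i}$ times an elementary symmetric polynomial, reading off the leading monomial $\prod_i\prod_{u=1}^{N_i}z(u)$, and the matching argument giving leading index $\{m_1,\dots,m_r\}$) are correct and simply make explicit what the paper declares ``clear''; your coefficient $\prod_i(-1)^{N_i}\binom{M-1}{\rho_i}$ is exactly the paper's formula \eqref{leaco}.

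The gap is in your exchange argument for the other summands. The invariant you assert, ``$b_i+e_i<b_j+e_j$ for $i<j$ is maintained throughout because $b_{i+1}-b_i\ge p$ dominates the exponent increments $c<p$'', is false. Take $q=5$, $k=1$, $p=11$ (so $M=2$, $c=7$), $r=3$, $g=3$, $n=20$, $(l_1,l_2,l_3)=(3,2,1)$; then $b=(7,18,29)$. The tuple $e=(2c,2,c-2)=(14,2,5)$ is achievable (set $a_{12}=a_{13}=7$, $a_{23}=2$) and gives $d=(21,20,34)$, so $d_1>d_2$. Your heuristic overlooks that $e_i$ ranges over $[0,(r-1)c]$, not $[0,c]$, so for $r\ge 3$ the differences $e_j-e_i$ can overwhelm $b_j-b_i$.

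The paper handles this part by a direct inductive comparison rather than rearrangement: if $e_1<(r-1)c$, then at the first index $s\le m_1$ where the $z_s$-power of $F(d_1)$ drops below that of $F(N_1)$, the total $z_s$-power of $\tilde z$ is strictly smaller than in $z^{l_1,\dots,l_r}$, because each of the remaining $r-1$ factors contributes at most $M$ while in the special summand they contribute exactly $M$; hence $\tilde z\prec z^{l_1,\dots,l_r}$. Having forced $e_1=(r-1)c$, the constraint $a_{j1}=0$ yields $e_2\le(r-2)c$, and one repeats the argument for $e_2$, and so on. Your exchange approach can in fact be repaired with this same observation: for achievable $e\ne e^*$ the \emph{smallest} deviating index $i$ necessarily has $e_i<e_i^*=(r-i)c$, and for any $j>i$ with $e_j>e_j^*$ one gets $d_j-d_i>(j-i)(p-c)=2M(j-i)\ge 2M$, so that particular transfer is both valid and strictly increasing (the gap exceeds $M$, hence crosses a block boundary). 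But your proposal as written does not supply this justification; the blanket ordering claim is the missing piece, not merely ``delicate bookkeeping''.
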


\begin{proof}

We rewrite $z^{l_1,\dots,l_r}$ using the integers $m_1,\dots,m_r$,
\bean
\label{lS}
z^{l_1,\dots,l_r} = \prod_{i=1}^r 
z_1^{M}\cdots z_{m_i-1}^{M} \, z_{m_i}^{(n-m_i+1)M + (r-i)c-l_ip}.
\eean
Let  $z^{l_1,\dots,l_r} = z_1^{d_1} \dots z_n^{d_n}$  for some $d_1,\dots,d_n$. Then
\bean
\label{d_j}
d_{m_i}
&=&
(n-m_i+1)M + (r-i)(c+M) -l_ip, \qquad i=1,\dots,r;
\\
\notag
d_j&=&
rM, 
\qquad \qquad \ 
\qquad\qquad\qquad\qquad\qquad
\ j=1,\dots, m_1-1; 
\\
\notag
d_j&=&
(r-i)M, \quad j=m_i+1,\dots, m_{i+1}-1, \quad i=1,\dots, r-1;
\\
\notag
d_j
&=&
0\quad\qquad \qquad \ 
\qquad\qquad\qquad\qquad\qquad\
 \ j=m_r+1,\dots, n.
\eean

It is clear that $z^{l_1,\dots,l_r}$ is the lexicographically maximal monomial which
can be produced by $S(z)$. Let $C_{l_1,\dots,l_r}$ be the coefficient
of $z^{l_1,\dots,l_r}$ in  $S(z)$. Then
$C_{l_1,\dots,l_r}=\sum_{J\in\Ik} c_J V_J$ with $c_J\in\F_p$.
It is clear that $\{m_1,\dots,m_r\}$ is the leading index of $C_{l_1,\dots,l_r}$,
if $c_{\{m_1,\dots,m_r\}}$ is nonzero, but 
\bean
\label{leaco}
c_{\{m_1,\dots,m_r\}} = 
\prod_{i=1}^r(-1)^{(n-m_i+1)M+(r-i)c-l_ip}
\binom{M-1}{(n-m_i+1)M+(r-i)c-l_ip}\,,
\eean
which is nonzero due to \eqref{2Mc} and \eqref{m}.
Thus  $S(z)$ is nonzero, its leading term equals
$C_{l_1,\dots,l_r}z^{l_1,\dots,l_r}$, and the leading index of $C_{l_1,\dots,l_r}$
is $\{m_1,\dots,m_r\}$. 
It remains to prove that the leading monomial of any other
 nonzero summand in \eqref{smn} is lexicographically smaller than
$z^{l_1,\dots,l_r}$.

\vsk.2>

Let $\tilde S(z)$ be any other summand in \eqref{smn},
\bea
\tilde S(z) \,: =\,\Big(\prod_{i<j} (-1)^{a_{ji}}\binom{c}{a_{ji}}\Big)
[\int_{\{l_1,\dots,l_r\}_p} W(t,z) \prod_{i=1}^r t_i^{\sum_{j\ne i}a_{ij}}\prod_{s=1}^n(t_i-z_s)^M
dt_1\dots dt_r]_p.
\eea
It is clear that the  lexicographically maximal monomial which can be produced by $\tilde S(z)$ 
equals
\bean
\label{osum}
\tilde z^{\,l_1,\dots,l_r}\,:=\,\prod_{i=1}^r\prod_{u=1}^{nM-l_ip+\sum_{j\ne i}a_{ij}}z(u).
\eean
Assume that  $\sum_{j\ne 1}a_{1j} < (r-1)c$. Let $s$ be the least index such that the maximal power 
of $z_s$ dividing $\prod_{u=1}^{nM-l_1p+\sum_{j\ne 1}a_{1j}}z(u)$ is strictly 
smaller than the maximal power of $z_s$ (which we denote by $b$)  dividing $\prod_{u=1}^{nM+(r-1)c-l_1p}z(u)$.
Then the maximal power of $z_s$ dividing $\prod_{i=1}^r\prod_{u=1}^{nM-l_ip+\sum_{j\ne i}a_{ij}}z(u)$
is strictly smaller than $b+(r-1)M$. This implies that $\tilde z^{\,l_1,\dots,l_r}$ is lexicographically smaller than $z^{l_1,\dots,l_r}$.

Thus, a summand $\tilde S(z)$ must have  $\sum_{j\ne 1}a_{ij} = (r-1)c$ in order to have a monomial as large lexicographically as 
$z^{l_1,\dots,l_r}$. This means that $a_{1j}=c$ and $a_{j1}=0$  for $j\ne 1$.

\vsk.2>
Now take any summand $\tilde S(z)$ in \eqref{smn} with  $\sum_{j\ne 1}a_{1j} = (r-1)c$. In a similar way we  show that  
$\tilde S(z)$ must have  $\sum_{j\ne 2}a_{2j} = (r-2)c$ in order to have a monomial as large lexicographically as 
$z^{l_1,\dots,l_r}$. Repeating this reasoning we  conclude that the special summand $S(z)$ is the only summand in \eqref{smn}
which may have
the monomial $z^{l_1,\dots,l_r}$ with a nonzero coefficient; and no summands in \eqref{smn}
may have a monomial larger than $z^{l_1,\dots,l_r}$. 
Lemma \ref{lem lea} is proved.  
\end{proof}

\subsection{Proof of Theorem \ref{thm li}}
\label{sec proof}

Let $[I^{(l_1,\dots,l_r)}(z)]_p$  be one of the vector-polynomials of Theorem \ref{thm li}. 
Then  $[I^{(l_1,\dots,l_r)}(z)]_p$ has the leading term $C_{l_1,\dots,l_r}z^{l_1,\dots,l_r}$ described in
Lemma \ref{lem lea}. Also the leading index $\{m_1,\dots,m_r\}$
of $C_{l_1,\dots,l_r}$
is determined by $l_1,\dots,l_r$ in \eqref{m}.
Let $f_{l_1,\dots,l_r}(z) \in\F_p[z]$. Then the leading term of the vector-polynomial
$f_{l_1,\dots,l_r}(z)[I^{(l_1,\dots,l_r)}(z)]_p$  equals the product of leading terms  of
$f_{l_1,\dots,l_r}(z)$ and $[I^{(l_1,\dots,l_r)}(z)]_p$. Moreover,
the leading index of the leading coefficient of 
$f_{l_1,\dots,l_r}(z)[I^{(l_1,\dots,l_r)}(z)]_p$ equals 
the leading index  $\{m_1,\dots,m_r\}$ of the leading coefficient of 
$[I^{(l_1,\dots,l_r)}(z)]_p$.

Consider a
 sum $\sum_{kg+r-1 \geq l_1>\dots>l_r\geq 1} f_{l_1,\dots,l_r}(z) [I^{(l_1,\dots,l_r)}(z)]_p$
 as in Theorem \ref{thm li}.
Then all nonzero summands have 
different   leading indices due to the previous remark and  Lemma \ref{lem difl}. This implies that such a sum is not zero if it has nonzero summands.
Theorem \ref{thm li} is proved.

\begin{example}
Let $q=2$, $n=5$, $r=2$. Then Theorem \ref{thm li} says that for any odd 
prime $p$ there is just one $p$-hypergeometric solution
$[I^{(2,1)}(z_1,\dots,z_5)]_p$. This polynomial is homogeneous of 
degree $2p-4$ and takes values in $\Sing W^{\ox 5}[1]_p$.
The leading term of $[I^{(2,1)}(z_1,\dots,z_5)]_p$ is
$C_{2,1} z_1^{p-2} z_2^{(p-1)/2}z_3^{(p-3)/2}$ where the leading coefficient 
$C_{2,1}\in W^{\ox 5}[1]_p$ has the leading index $\{m_1,m_2\}=\{1,3\}$.

Notice that the space $\Sing W^{\ox 5}[1]$ has dimension 5.

\end{example}

\subsection{Leading terms and eigenvectors}

Let  $I(z) \in \Sing W^{\ox n}[n-2r]_p\ox\F_p[z]$
 be a polynomial solution of the KZ equations \eqref{KZ}
 with a positive integer parameter $q$ 
 (not necessarily a $p$-hypergeometric solution).
 Let $C z_1^{d_1}\dots z_n^{d_n}$ be its leading term,
 $C\in \Sing W^{\ox n}[n-2r]_p$. Then we have modulo $p$,
 \bean
 \label{eig}
 \sum_{\ell=j+1}^n \bar \Om_{j,\ell}\, C \equiv qd_j C, \qquad j=1,\dots,n-1, \qquad d_n\equiv 0,
 \eean
see \cite[Lemma 5.1]{V7}.
 Thus the leading coefficient $C$ is an eigenvector
of the linear operators $\bar \Om_j = \sum_{\ell=j+1}^n \bar \Om_{j,\ell}$, $j=1,\dots,n-1$, with prescribed eigenvalues.
 
 \vsk.2>
 
 An eigenbasis of the operators $\bar \Om_j$, $j=1,\dots,n-1$, on 
 $\Sing W^{\ox n}[n-2r]_p$ is formed by
 the so-called iterated singular vectors, for example see \cite{V2}.
 Such an iterated vector is determined by its leading index.
 
 \vsk.2>
 If $p$ is large enough with respect to $n$, then the vectors of that eigenbasis are separated by eigenvalues.
 
 \vsk.2>

 Let $[I^{(l_1,\dots,l_r)}(z)]_p$ be one of the $p$-hypergeometric solutions of Theorem \ref{thm li}.
 Let 
 \\
 $C_{l_1,\dots,l_r}z_1^{d_1}\dots z_n^{d_n}$ be its leading term, 
 where $d_j$ are defined in \eqref{d_j}. Let $\{m_1,\dots,m_r\}$ be
 the leading index of $C_{l_1,\dots,l_r}$.
  If $p$ is large enough with respect to $n$, then $C_{l_1,\dots,l_r}$ is the iterated singular vector with
  leading index $\{m_1,\dots,m_r\}$. It is the eigenvector of the operators  $\bar \Om_j$, $j=1,\dots,n-1$,
  with eigenvalues defined by formula \eqref{eig}.

\section{Solutions and a Cartier map}
\label{sec 5}

In this section we  discuss how the two objects:
\begin{itemize}
\item the 
 set of indices
 $(l_1,\dots,l_r)$ with $kg+r-1\geq l_1>\dots>l_r\geq 1$,
 appearing in Theorem \ref{thm li}, 
 \item
and the number of such indices
  $\binom{kg+r-1}{r}$, appearing in Corollary \ref{cor rank},
  \end{itemize}
     are related
 to the integrand 
$\Phi(t,z) W(t,z)dt_1 \wedge \dots\wedge dt_r$ of the integral representation
of complex hypergeometric solutions of the \KZ/ equations,
 appearing in Theorems \ref{thm s} and \ref{thm alls}.

\vsk.2>

Recall that the complex hypergeometric solutions
with values in $\Sing W^{\ox n}[n-2r]$  are given by the formulas\,:
\bea
&
I^{(\ga)}(z) = \int_{\ga(z)} \Phi (t,z)  W(t,z)\, dt_1 \wedge \dots\wedge dt_r\,, 
\\
&
\Phi(t,z) = 
\prod_{1\leq i<j\leq r}(t_i-t_j)^{2/q}\prod_{i=1}^r\prod_{s=1}^n(t_i-z_s)^{-1/q},
\\
&
W(t,z) 
= \sum_{1\leq i_1<\dots<i_r \leq n} W_{i_1,\dots,i_r}(t,z)\, V_{i_1,\dots,i_r}\,,
\quad
W_{i_1,\dots,i_r} (t,z)
=
\Sym_{t_1,\dots,t_r} \prod_{j=1}^r\frac 1{t_j-z_{i_j}}\,,
\eea
see Section \ref{Sol in C},
while the corresponding $p$-hypergeometric solutions are given by the formulas\,:
\bea
&
I^{(l_1,\dots,l_r))}(z) = \int_{\{l_1,\dots,l_r\}_p} \Phi_p (t,z)  W(t,z)\, dt_1 \dots dt_r\,, 
\\
&
\Phi_p(t,z) = 
\prod_{1\leq i<j\leq r}(t_i-t_j)^{c}\prod_{i=1}^r\prod_{s=1}^n(t_i-z_s)^{M},
W(t,z)=\sum_{J\in \Ik}W_J(t,z) V_J,
\eea
see Section  \ref{sec 2.5}.

\subsection{Square integrability criterion}

Let $M$ be a complex manifold of complex dimension
$r$.  If $f$ is a meromorphic function on $M$ and $S$
is an irreducible subvariety of $M$, then the order of $f$ along
$S$, $\on{ord}_S(f)$, is the coefficient of the exceptional divisor of the blow up of
$S$ at the divisor of $f$. This notion  generalizes to the setting where $f$
 has only finitely many determinations, which  means that $f$ becomes univalued on a
finite (possibly ramified) cover of $M$. Then $f$ has at a generic point of the
exceptional divisor a fractional order. 

Let $\om$ be a multivalued meromorphic $r$-form on $M$ with
only finitely many determinations and let $S$ be an irreducible subvariety of
$M$.  Write $\om$ at some
point $s$ of $S$ as $f\om_0$ where $\om_0$ is $d$-form on $M$ that is nonzero at $s$ and $f$ is
multivalued meromorphic at $s$. The logarithmic order of $\om$ along $S$ is 
$\on{codim}(S) + \on{ord}_{S;s}(f)$. This only depends on $\om$  and $S$.

Suppose that $D\subset M$ is a hypersurface which is arrangement-like in the sense
that $D$ can be covered by analytic coordinate charts of M on which D is
given by a product of linear forms in the coordinates, see \cite{STV}. It is clear that D then
comes with a natural partition into connected, locally closed submanifolds,
its strata.

We say that a stratum $S$ is decomposable if at a generic point $s\in S$ the germ $D_s$ of the hyperplane
arrangement 
can be decomposed into the disjoint sum $A_1 \cup A_2$ of two germs of hyperplane arrangements and,
after a suitable linear coordinate change, defining equations for $A_1$ and $A_2$ have no common variables.
We say that a stratum $S$ is dense if it is of codimension 1 or if it is not decomposable, see \cite{V1, STV}. 

We have the following trivial observation.

\begin{prop}[\cite{LV}]
\label{prop LV}

 Suppose $M$ is compact and $\om$ is a meromorphic multivalued
$r$-form on $M$ with only finitely many determinations and whose polar set is
contained in an arrangement-like hypersurface $D$. Then the following properties are equivalent:
\begin{enumerate}
\item[(i)]
the form $\om$ is square integrable in the sense that
$\int_M \om\wedge\bar\om$ converges;
\item[(ii)] the form $\om$ has positive logarithmic order along any dense stratum of $D$;
\item[(iii)]
if $\tilde M\to  M$ is a proper surjective map with $\tilde M$ a complex manifold
of the same dimension as $M$ and such that $\om$ becomes a univalued
$d$-form on $\tilde M$, then the latter form is regular.
\end{enumerate}

\end{prop}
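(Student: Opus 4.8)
The plan is to prove the two equivalences (i)$\Leftrightarrow$(iii) and (i)$\Leftrightarrow$(ii); together they give the proposition. Every assertion will be reduced to a local question near $D$ and to the elementary fact that $|u|^{2\bt}$ is locally integrable near $0\in\C$ precisely when $\bt>-1$. Since $M$ is compact and $\om$ is smooth on $M\minus D$, the integral $\int_M\om\wedge\bar\om$ converges if and only if it converges on a neighborhood of each point of $D$; here $\om\wedge\bar\om$ is understood as the single-valued $(r,r)$-form whose convergence is asserted in (i).

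For (i)$\Leftrightarrow$(iii), I would first record that a proper surjective holomorphic map $\pi\colon\tilde M\to M$ of complex manifolds of equal dimension is generically finite of some degree $\delta$, that $\tilde M$ is compact (as $\pi$ is proper and $M$ compact), and that $\int_{\tilde M}\pi^*(\om\wedge\bar\om)=\delta\int_M\om\wedge\bar\om$ because $\pi$ is a biholomorphism off a set of measure zero. Hence (i) holds if and only if the univalued form $\pi^*\om$ is $L^2$ on $\tilde M$. But a univalued meromorphic $r$-form on a compact complex manifold is $L^2$ if and only if it is regular: a pole along any prime divisor forces divergence at a generic point of that divisor by the one-variable criterion, whereas a holomorphic form on a compact manifold is bounded. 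Thus (i) is equivalent to the regularity of $\pi^*\om$ for every admissible $\pi$; since at least one admissible $\pi$ exists, because $\om$ has finitely many determinations and so becomes univalued on a finite branched cover which one then resolves, this yields (i)$\Leftrightarrow$(iii).

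For (i)$\Leftrightarrow$(ii), I analyze $\int\om\wedge\bar\om$ near a generic point $s$ of a stratum $S$ of codimension $m$, writing $\om=f\om_0$ with $\om_0$ holomorphic and nonvanishing at $s$ and $\al:=\on{ord}_S(f)$. Blowing up $S$ in the chart $x_1=u$, $x_j=uv_j$ contributes a Jacobian factor $u^{m-1}$ to $\pi^*\om_0$ and a factor $u^{\al}$ to $\pi^*f$, so $\pi^*(\om\wedge\bar\om)$ acquires the factor $|u|^{2(\al+m-1)}$, the remaining variables ranging over a transversal carrying the arrangement induced on the exceptional divisor $E_S$. The one-variable criterion in $u$, together with induction on codimension applied to that induced arrangement, shows that the integral converges near $S$ if and only if $\al+m-1>-1$, that is if and only if the logarithmic order $\on{codim}(S)+\on{ord}_S(f)=m+\al$ is positive; note the codimension-one strata are always dense and recover the basic condition along the smooth locus of $D$. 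Finally, along a decomposable stratum the germ of $D$ splits into two arrangements in disjoint groups of variables, so $\om\wedge\bar\om$ factors as a product and Fubini makes convergence there automatic once it holds along the strata of each factor. Hence only dense strata impose a condition, giving (i)$\Leftrightarrow$(ii).

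The step I expect to demand the most care is the reduction to these local normal forms and the role of the stratification: identifying, for an arrangement-like $D$, which strata genuinely constrain convergence (the dense ones), verifying the product factorization along decomposable strata, and carrying out the inductive bookkeeping over the transversal variables created by each blow-up. This is the combinatorial geometry of arrangement resolutions of \cite{STV,V1}; granting it, the only analytic input is the elementary order counts and the one-dimensional integrability criterion recorded above.
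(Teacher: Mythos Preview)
The paper does not prove this proposition at all: it introduces it as a ``trivial observation'' and attributes it to \cite{LV}, giving no argument. So there is no paper proof to compare against; your sketch supplies what the paper omits.

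Your outline is sound and matches the spirit of the ``trivial observation'' label: the equivalence (i)$\Leftrightarrow$(iii) via generic finiteness and the one-variable $L^2$ criterion is correct, and the blow-up computation for (i)$\Leftrightarrow$(ii) is the standard one. The one place I would tighten is the handling of decomposable strata. The claim that ``$\om\wedge\bar\om$ factors as a product'' is not quite right in general, since $\om$ itself need not split; the cleaner statement is that the wonderful resolution of an arrangement-like $D$ is obtained by blowing up only the \emph{dense} strata, and this already produces normal crossings. The logarithmic orders along the resulting exceptional divisors are precisely the logarithmic orders along the dense strata, so positivity of those orders is equivalent (by Fubini in the normal-crossing chart) to local $L^2$ convergence. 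That is the content of the references \cite{STV,V1} you invoke, and once phrased that way the induction you describe goes through without needing $\om$ to factor.
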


\subsection{Square integrable  differential $r$-forms}

Let $t=(t_1,\dots,t_r)$ be coordinates on $\C^r\subset (\Bbb P^1)^r$ and $z=(z_1,\dots,z_n)$ distinct parameters.

Consider a differential $r$-form
\bean
\label{omega}
\om\ &=&\  P(t) \,\Phi(t,z)^k  dt\ 
\\
\notag
&=& \ P(t)\cdot \prod_{i=1}^r\prod_{s=1}^n (t_i-z_s)^{-k/q} 
\cdot\prod_{1\leq i<j\leq n}(t_i-t_j)^{2k/q}\cdot dt_1\wedge\dots\wedge dt_r
\eean
on $\C^r\subset (\Bbb P^1)^r$. Here $k$,  $0<k< q$, is a positive integer; $dt=dt_1\wedge\dots\wedge dt_r$; and $P(t)$ is a polynomial
in $t$. 

\vsk.2>

Let $n=qg+2r-1$ where $g$ is some positive integer, cf. \eqref{nqg}.

\begin{thm}
\label{thm 6.2}
The form $\om$ is square integrable on $(\Bbb P^1)^r$ if and only if 
\bean
\label{deg}
\deg_{t_i}P \leq kg-1 \qquad \on{for\,all}\qquad  i=1,\dots,r.
\eean

\end{thm}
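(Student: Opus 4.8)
The plan is to apply the square integrability criterion of Proposition \ref{prop LV}. Since $(\mathbb{P}^1)^r$ is compact and the polar set of $\om$ lies on the arrangement-like hypersurface
\[
D=\bigcup_{i,s}\{t_i=z_s\}\cup\bigcup_{i<j}\{t_i=t_j\}\cup\bigcup_i\{t_i=\infty\},
\]
square integrability is equivalent to the positivity of the logarithmic order of $\om$ along every dense stratum of $D$. First I would describe these dense strata. Because a stratum is dense only when its germ is non-decomposable, and because the germ at a generic point splits as a product over the coincidence blocks of the coordinates $t_1,\dots,t_r$, the dense strata are exactly those with a single active block $B\subseteq\{1,\dots,r\}$, all remaining coordinates sitting at generic, pairwise distinct finite values. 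Such a block sits either (a) at a generic finite value, where the local arrangement is the braid arrangement $\{t_i=t_j\}_{i,j\in B}$; (b) at some $z_s$, where it is $\{t_i=z_s\}_{i\in B}\cup\{t_i=t_j\}_{i<j\in B}$; or (c) at $\infty$, where in the coordinates $s_i=1/t_i$ it becomes $\{s_i=0\}_{i\in B}\cup\{s_i=s_j\}_{i<j\in B}$.

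The key computational tool is that the stratum $S$ attached to a block $B$ is a linear subspace, so blowing it up yields an exceptional divisor along which every hyperplane through $S$ vanishes to order exactly $1$; hence $\on{ord}_S$ of $\om$ is just the sum of the exponents of the factors passing through $S$, together with the contribution of the holomorphic Jacobian at $\infty$. I would carry out this count for each type. For the finite strata (a) and (b), writing $m=|B|$, the counts give logarithmic orders $(m-1)(1+mk/q)$ and $m+\tfrac{mk(m-2)}{q}$ respectively; using only $0<k<q$, these are positive for every $m\ge 1$, so the finite strata never obstruct square integrability and impose no condition on $P$.

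The decisive computation is for the strata at infinity. Passing to $s_i=1/t_i$ for $i\in B$, each factor $(t_i-z_s)^{-k/q}$ contributes $+k/q$, each $(t_i-t_j)^{2k/q}$ with $j\notin B$ contributes $-2k/q$, the diagonal factors inside $B$ split as $(s_j-s_i)^{2k/q}s_i^{-2k/q}s_j^{-2k/q}$, and the Jacobian $dt_i=-s_i^{-2}ds_i$ contributes $-2$ per index. Substituting $n=qg+2r-1$ (see \eqref{nqg}) collapses the per-index order to $kg+k/q-2-\deg_{t_i}P$, and after blowing up $S=\{s_i=0,\ i\in B\}$ I expect the logarithmic order along $S$ to come out as
\[
m\,kg+m^{2}k/q-m-\sum_{i\in B}\deg_{t_i}P .
\]
For $m=1$ this is $kg+k/q-1-\deg_{t_i}P$, which, since $\deg_{t_i}P\in\Z$ and $0<k/q<1$, is positive exactly when $\deg_{t_i}P\le kg-1$; these codimension-one conditions are precisely \eqref{deg}. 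For $m\ge 2$ the positivity is automatic once \eqref{deg} holds, because $\sum_{i\in B}\deg_{t_i}P\le m(kg-1)<m\,kg+m^{2}k/q-m$. Conversely, positivity along the codimension-one strata at infinity forces \eqref{deg}. Hence the two conditions are equivalent, which is the assertion.

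The main obstacle I anticipate is the bookkeeping at infinity and, relatedly, a careful description of the dense strata of $D$ inside the compactification $(\mathbb{P}^1)^r$: one must check that each diagonal $\{t_i=t_j\}$ indeed passes through the infinity stratum $\{s_i=s_j=0\}$ (so that it enters the local arrangement there), that the hyperplanes $\{t_i=z_s\}$ do not, and that no multi-block stratum is dense. Once the dense strata are correctly listed and the blow-up order computation is organized as above, the equivalence reduces to the elementary comparison $m(kg-1)<m\,kg+m^{2}k/q-m$, so the only genuine constraints are the codimension-one conditions \eqref{deg}.
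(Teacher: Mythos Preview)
Your argument is correct and follows essentially the same route as the paper: apply the criterion of Proposition~\ref{prop LV}, list the dense strata as diagonals, coalescences at some $z_s$, and coalescences at $\infty$, verify that the first two types always have positive logarithmic order, and show that the infinity strata force exactly the degree bounds \eqref{deg}. One small caveat: for a block $B$ at infinity with $|B|=m\ge 2$, the quantity $m\,kg+m^{2}k/q-m-\sum_{i\in B}\deg_{t_i}P$ is in general only a \emph{lower bound} for the logarithmic order (since the joint degree of $P$ in the $B$-variables can be strictly less than $\sum_{i\in B}\deg_{t_i}P$), but this is all you need, and the ``only if'' direction uses $m=1$ where equality holds.
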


\begin{proof} Denote by $D$ the support of the divisor of $\om$. The support $D$ lies in the union
of hypersurfaces defined by $t_i = z_s$, $t_i =\infty$, and $t_i=t_j$ for $i<j$. This union is
clearly arrangement-like. Its dense strata of codimension $l$ are:
\begin{enumerate}
\item[(i)]
diagonals in $(\Bbb P^1)^r$ defined by letting $l\geq 2$ coordinates to coalesce; 

\item[(ii)] 
loci in $(\Bbb P^1)^r$ defined by setting $l\geq 1$ coordinates equal to $\infty$;

\item[(iii)]
loci in $(\Bbb P^1)^r$ defined by setting $l\geq 1$ coordinates equal to some $z_s$.

\end{enumerate}

Let $S$ be defined by the equations $t_{i_1}=\dots=t_{i_l}$ for some $1\leq i_1<\dots<i_l\leq r$.
Then the logarithmic order of $\om$ along $S$ is\  \ $\geq l-1 + \binom{l}{2}\cdot 2k/q >0$.

Let $S$ be defined by the equations $t_{i_1}=\dots=t_{i_l}=z_s$ for some $1\leq i_1<\dots<i_l\leq r$.
Then the logarithmic order $\om$ along $S$ is 
$\geq \ l -lk/q + \binom{l}{2}\cdot 2k/q = l(1 -k/q) + \binom{l}{2}\cdot 2k/q>0$.

Let $S$ be defined by the equation $t_{i_1}=\infty$ for some $i_1$.
 In the coordinates $u_1,\dots,u_r$, $u_i=1/t_i$ for $i=1,\dots,r$, we have
\bean
\label{om l}
\om &=& (-1)^r \Big( P(1/u_1,\dots, 1/u_r, z) \cdot \prod_{i=1}^r u_r^{\deg_{t_i} P}\Big)
\cdot \prod_{i=1}^r\prod_{s=1}^n (1-z_s u_i)^{-k/q} 
\cdot
\\
\notag
&\cdot&  
\prod_{1\leq i<j\leq n}(u_j-u_i)^{2k/q}\cdot \prod_{i=1}^r u_r^{-\deg_{t_i} P + nk/q-(r-1)2k/q-2} \cdot
du_1\wedge\dots\wedge du_r
\eean
Hence the logarithmic order of $\om$  along $S$ equals 
\bea
1 + nk/q-(r-1)2k/q-2-\deg_{t_{i_1}} P
&=&(n-2(r-1))k/q -1 -\deg_{t_{i_1}}P  
\\
&=&
   kg  - 1 -\deg_{t_{i_1}}P + k/q.
\eea
  Hence the logarithmic order along $S$ is positive
if and only if $\deg_{t_{i_1}}P  \leq kg-1$.

\vsk.2>
Let $S$ be defined by the equations  $t_{i_1}=\dots=t_{i_l}=\infty$ for some $1\leq i_1<\dots<i_l\leq r$.
It follows from \eqref{om l} that
 the logarithmic order along $S$ is positive if the logarithmic order is positive along  every hyperplpane
defined by the equation $t_{i_j}=\infty$ for $j=1,\dots,l$.  The theorem is proved.
\end{proof}

\subsection{Schur polynomials} 
  
  A sequence of integers $a=(a_1\geq a_2\geq \dots\geq a_r\geq 0)$
 is called a partition. For a partition $a$
  the polynomial $m_{a}(t)=\Sym_t t_1^{a_1}\dots  t_r^{a_r}$  
   is   called a  symmetric monomial function.
   The polynomial
\bea
s_{(a_1,\dots,a_r)}(t) = \frac{\det \,(t_i^{a_j +r-j})_{i,j=1,\dots,r}}{\prod_{1\leq i<j\leq r}(t_i-t_j)}
\eea
is called a Schur polynomial.  It is known that
\bean
\label{sm}
s_{a}(t)\ = \ \sum_{b\leq a} \,K_{a,b} \,m_{b_1,\dots,b_r}(t),
\eean
where $K_{a,b}$ are nonnegative integers, $K_{a,a}=1$. The inequality
$b\leq a$ means $\sum_{j=1}^ib_j\leq \sum_{j=1}^ia_j$ for all $i=1,\dots,r$.
The numbers
 $K_{a,b}$ are called the Kostka numbers.

\vsk.2>
For a positive integer $d$, denote
\bea
A(d)=\{(a_1,\dots,a_r)  \mid d\geq a_1\geq a_2\geq \dots\geq a_r\geq 0, \, a_i\in\Z \}.
\eea 
Let $\mc V(d)$ be the free $\Z$-module with basis
$\{ m_a(t) \,|\,a\in A(d) \}$.
The module $\mc V(d)$ has rank $\binom{d+r}{r}$.
The set 
$\{ s_a(t)  \,|\, a\in A(d) \}$
of Schur polynomials is a basis of $\mc V(d)$ by formula \eqref{sm}.

\medskip

Let us return to Theorem \ref{thm 6.2}.
Let
$\mc W$ be the vector space of all differential $r$-forms $\om = P(t)\Phi(t,z)^kdt$ such that $P(t)$ is 
a polynomial in $t_1,\dots,t_r$ symmetric with respect to
permutations of $t_1,\dots,t_r$, and $\om$ is square integrable on  $(\Bbb P^1)^r$.

\begin{cor}
\label{cor sym}

The set 
$\{s_{a}(t) \Phi(t,z)^kdt\, \vert\, a\in A(kg-1)\}$
of differential $r$-forms on $(\Bbb P^1)^r$ is a basis of\  $\mc W$. The vector space  $\mc W$ has  dimension
$\binom{kg+r-1}{r}$.
\qed
\end{cor}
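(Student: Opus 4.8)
The plan is to identify the space $\mc W$ with the module $\mc V(kg-1)$ from the preceding subsection and then read off both the basis and the dimension. First I would observe that an element $\om = P(t)\Phi(t,z)^k dt \in \mc W$ is determined by its symmetric polynomial factor $P(t)$, since $\Phi(t,z)^k dt$ is a fixed form. By Theorem \ref{thm 6.2}, square integrability of $\om$ on $(\Bbb P^1)^r$ is equivalent to the degree bound $\deg_{t_i} P \leq kg-1$ for all $i=1,\dots,r$. Combined with the requirement that $P$ be symmetric under permutations of $t_1,\dots,t_r$, this says precisely that $P$ lies in the span of the symmetric monomial functions $m_a(t)$ with each part $a_i \leq kg-1$. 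Since $a$ is a partition, the condition $\deg_{t_i} P \leq kg-1$ for all $i$ is equivalent to the single condition $a_1 \leq kg-1$ on the largest part, which is exactly the defining condition of the index set $A(kg-1)$.

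Next I would make this correspondence precise by noting that the assignment $P(t)\mapsto P(t)\Phi(t,z)^k dt$ gives a linear isomorphism from the space of symmetric polynomials $P$ satisfying $\deg_{t_i}P\le kg-1$ onto $\mc W$. Under this isomorphism the space of admissible $P$ is identified with $\mc V(kg-1)$, the free $\Z$-module (or the corresponding vector space over the relevant field) with basis $\{m_a(t)\mid a\in A(kg-1)\}$. The module $\mc V(kg-1)$ has rank $\binom{(kg-1)+r}{r} = \binom{kg+r-1}{r}$, as recorded just above the corollary, so $\dim \mc W = \binom{kg+r-1}{r}$, which gives the dimension claim.

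For the basis statement, I would invoke the fact stated in the excerpt that $\{s_a(t)\mid a\in A(kg-1)\}$ is a basis of $\mc V(kg-1)$, which follows from the unitriangular change of basis \eqref{sm} with $K_{a,a}=1$ between Schur polynomials and symmetric monomial functions. Applying the isomorphism $P\mapsto P\,\Phi(t,z)^k dt$ to this basis produces the set $\{s_a(t)\Phi(t,z)^k dt\mid a\in A(kg-1)\}$, which is therefore a basis of $\mc W$. The only point requiring a small check is that each Schur polynomial $s_a$ with $a\in A(kg-1)$ indeed satisfies $\deg_{t_i}s_a \leq kg-1$, so that the corresponding form is square integrable; this is immediate from \eqref{sm}, since every monomial $m_b$ appearing in $s_a$ has $b\leq a$ and hence $b_1\le a_1\le kg-1$.

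I do not expect a serious obstacle here: the corollary is a direct translation of Theorem \ref{thm 6.2} into the language of the Schur-polynomial subsection. The main conceptual step—already essentially done by the placement of the Schur-polynomial material—is recognizing that the square-integrability degree bound together with symmetry picks out exactly the index set $A(kg-1)$. The only mild subtlety is to confirm that the degree condition $\deg_{t_i}P\le kg-1$ for every $i$ collapses, for a symmetric (partition-indexed) polynomial, to the single bound $a_1\le kg-1$ on the leading part, so that the admissible $P$'s are precisely $\mc V(kg-1)$ and nothing is lost or gained.
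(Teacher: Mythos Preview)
Your proposal is correct and is exactly the argument the paper has in mind: the corollary is marked with a \qed{} and no proof, since it follows immediately from Theorem~\ref{thm 6.2} together with the Schur-polynomial facts collected just above. Your write-up simply spells out this implication, and the one nontrivial check you flag (that $\deg_{t_i}P\le kg-1$ for a symmetric $P$ is equivalent to $P\in\mc V(kg-1)$) is handled correctly.
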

  
  Notice that this binomial coefficient equals  the rank of the module $\mc M/(\sim)$   in Corollary \ref{cor rank}.

\vsk.2>

We introduce the following $\F_p[z]$-variant of the  vector space $\mc W$.
Let $\mc W_p[z]$ be the  free $\F_p[z]$-module with basis 
$\mc B$  of formal algebraic differential $r$-forms 
\bea
\om_a =s_{a}(t) \Phi(t,z)^kdt, \qquad a\in A(kg-1).
\eea
Let  $\mc B^* =\{\om^a \, |\,  a\in A(kg-1)\}$ be the collection of $\F_p[z]$-linear functions on $\mc W_p[z]$
such that $\langle \om^a, \om_b\rangle = \delta_{a,b}$ for all $a,b\in A(kg-1)$.

\subsection{Cartier map}

Assume that $(p,q)$ is of type 1. Hence
$q|(kp-1)$, $0<k\leq q/2$.

\vsk.2>

We define a map which sends every differential form $\Phi(t,z) W_J (t,z)dt $,
$J\in\Ik$,   to $\mc W_p[z]$, that is, to 
a linear combination of  differential $r$-forms $\om_a$, $a\in A(kg-1)$, with coefficients
in $\F_p[z]$. We call this map the Cartier map.
We have
\bea
&&\Phi(t,z) W_J (t,z) dt 
=
 \frac{\Phi(t,z)^{kp}}{\Phi(t,z)^{kp-1}} W_J (t,z) dt
\\
&&=
\Phi(t,z)^{kp} \prod_{1\leq i<j\leq r}(t_i-t_j)^{(2-2kp)/q}\prod_{i=1}^r\prod_{s=1}^n(t_i-z_s)^{(kp-1)/q}
 W_J (t,z) dt
\\
&&=
\frac{\Phi(t,z)^{kp}}{\prod_{1\leq i<j\leq r}(t_i-t_j)^{p}}
\prod_{1\leq i<j\leq r}(t_i-t_j)^{(2+(q-2k)p)/q}\prod_{i=1}^r\prod_{s=1}^n(t_i-z_s)^{(kp-1)/q}
 W_J (t,z) dt
\\
&&
=
\frac{\Phi(t,z)^{kp}}{\prod_{1\leq i<j\leq r}(t_i-t_j)^{p}}
\prod_{1\leq i<j\leq r}(t_i-t_j)^{c}\prod_{i=1}^r\prod_{s=1}^n(t_i-z_s)^{M}
 W_J (t,z) dt
\\
&&
=
\frac{\Phi(t,z)^{kp}}{\prod_{1\leq i<j\leq r}(t_i-t_j)^{p}} \Phi_p(t,z)
 W_J (t,z) dt
\eea
where $c, M$, $\Phi_p(t,z)$ are defined in \eqref{M c} and \eqref{Phi p}.

Since $(p,q)$ is of type 1, the integer $c$ is odd. Hence the  polynomial $\Phi_p(t,z) W_J (t,z) $ is skew-symmetric
 with respect to permutations of $t_1,\dots,t_r$.
We expand $\Phi_p(t,z) W_J (t,z)$ with respect to the $t$-variables as follows:
\bea
&&
\Phi_p(t,z) W_J (t,z)
=
\\
&&
\phantom{a}
=(t_1\dots t_r)^{p-1}
\!\!\!\sum_{(a_1,\dots,a_r)\in A(kg-1)}
\!\!\!
 c_J^{(a_1,\dots,a_r)}(z) \big(\sum_{\si\in S_r}(-1)^{|\si|}
t_{\si(1)}^{(a_1+r-1)p}t_{\si(1)}^{(a_2+r-2)p}\dots
t_{\si(r)}^{a_rp}\big)
+ {\sum}'
\\
&&
\phantom{aaa}
=
(t_1\dots t_r)^{p-1}\!\!\! \sum_{(a_1,\dots,a_r)\in A(kg-1)}\!\!\!
 c_J^{(a_1,\dots,a_r)}(z) s_{(a_1,\dots,a_r)}(t_1^p,\dots,t_r^p) \prod_{1\leq i<j\leq r}(t_i^p-t_j^p) + {\sum}',
\eea
where  $\sum'$ denotes the sum of the monomials  $t_1^{d_1}\dots,t_r^{d_r}$ such that at least one of
$d_1,\dots,d_r$ is not of the form $lp-1$ for some positive integer $l$; the coefficients $c_J^{(a_1,\dots,a_r)}(z) $ are suitable polynomials in $z$.

Returning to $\Phi(t,z) W_J (t,z) dt$ we write
\bea
&
\Phi(t,z) W_J (t,z) dt 
=
\\
&
=
\sum_{(a_1,\dots,a_r)\in A(kg-1)}
 c_J^{(a_1,\dots,a_r)}(z) s_{(a_1,\dots,a_r)}(t_1^p,\dots,t_r^p)\prod_{1\leq i<j\leq r} \frac{ t_i^p-t_j^p}{(t_i-t_j)^p}
  \Phi(t,z)^{kp}(t_1\dots t_r)^{p-1}dt
\\
&+\ \ 
 \frac{ \sum'}{\prod_{1\leq i<j\leq r}(t_i-t_j)^{p}} \Phi(t,z)^{kp} dt\,.
\eea
Notice that $\frac{ t_i^p-t_j^p}{(t_i-t_j)^p}\equiv 1$ (mod $p$).  
We define the {\it Cartier map} $\mc C$ by the formula
\bean
\label{def Cart}
&&
\mc C\  :\  \Phi(t,z) W_J (t,z) dt  \ \mapsto 
\!\!\!
\sum_{(a_1,\dots,a_r)\in A(kg-1)}
\!
[c_J^{(a_1,\dots,a_r)}(z)]_p\, s_{(a_1,\dots,a_r)}(t_1,\dots,t_r) \Phi(t,z)^{k} dt \, ,
\eean
cf. \cite{AH}.

\vsk.2>

Recall the collection $\{\om^{(a_1,\dots, a_r)} \,|\, (a_1,\dots, a_r)\in A(kg-1)\}$
of linear functions on $\mc W_p[z]$. We have
\bea
\big\langle \om^{(a_1,\dots, a_r)}, \,\mc C (\Phi(t,z) W_J (t,z) dt)\big\rangle = [c_J^{(a_1,\dots, a_r)}(z)]_p\,.
\eea

\vsk.2>

\begin{thm}
\label{thm Cart}

For any $p$-hypergeometric solution $[I^{(a_1+r, a_2+r-1,\dots,a_r+1)}(z)]_p$,
$kg-1\geq a_1\geq a_2\geq\dots \geq a_r\geq 0$ we have
\bean
\label{Ca}
[I^{(a_1+r, a_2+r-1,\dots,a_r+1)}(z)]_p\  =\ \sum_{J\in\Ik}\
[c_J^{(a_1,\dots,a_r)}(z)]_p V_J\,.
\eean

\end{thm}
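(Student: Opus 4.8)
The plan is to recognize that the $p$-integral $\int_{\{l_1,\dots,l_r\}_p}$ is simply coefficient extraction, and then to read off the required coefficient from the very Schur expansion of $\Phi_p(t,z)W_J(t,z)$ that was used to define $\mc C$. First I would use the definition of the $p$-integral together with the linearity of $\int_{\{l_1,\dots,l_r\}_p}$ in $J$ and the decomposition $W(t,z)=\sum_{J\in\Ik}W_J(t,z)V_J$ to write, as an identity of integer polynomials,
\[
I^{(l_1,\dots,l_r)}(z)=\sum_{J\in\Ik}\beta_J^{(l_1,\dots,l_r)}(z)\,V_J,
\]
where $\beta_J^{(l_1,\dots,l_r)}(z)$ denotes the coefficient of $t_1^{l_1p-1}\cdots t_r^{l_rp-1}$ in $\Phi_p(t,z)W_J(t,z)$. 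This reduces the theorem to the single scalar identity $\beta_J^{(a_1+r,\dots,a_r+1)}(z)=c_J^{(a_1,\dots,a_r)}(z)$, after which projecting the coefficients to $\F_p$ gives \eqref{Ca}.

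Next I would feed in the defining expansion of $\Phi_p(t,z)W_J(t,z)$ as the sum of a ``main'' piece $(t_1\cdots t_r)^{p-1}\sum_{a'\in A(kg-1)}c_J^{a'}(z)\,s_{a'}(t_1^p,\dots,t_r^p)\prod_{i<j}(t_i^p-t_j^p)$ and the remainder ${\sum}'$. By construction every monomial of ${\sum}'$ has at least one exponent not of the form $lp-1$, whereas the target monomial $t_1^{l_1p-1}\cdots t_r^{l_rp-1}$ has all exponents of this form; hence ${\sum}'$ contributes nothing to $\beta_J^{(l_1,\dots,l_r)}(z)$. Factoring $(t_1\cdots t_r)^{p-1}$ out of the main piece and setting $u_i=t_i^p$, extracting the coefficient of $t_1^{l_1p-1}\cdots t_r^{l_rp-1}$ becomes extracting the coefficient of $u_1^{l_1-1}\cdots u_r^{l_r-1}$ in $\sum_{a'}c_J^{a'}(z)\det\!\big(u_i^{a'_j+r-j}\big)$, where I have used the determinantal presentation $s_{a'}(u)\prod_{i<j}(u_i-u_j)=\det(u_i^{a'_j+r-j})$ coming from the definition of the Schur polynomial.

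The decisive step is this last read-off. Putting $l_i=a_i+r-i+1$\,—\,which is precisely the index shift in the statement, so that $(l_1,\dots,l_r)=(a_1+r,\dots,a_r+1)$\,—\,the target exponent vector is $(l_1-1,\dots,l_r-1)=(a_1+r-1,a_2+r-2,\dots,a_r)$, strictly decreasing because $a$ is a partition. For each fixed $a'$ the monomials appearing in $\det(u_i^{a'_j+r-j})$ are exactly the permutations of the strictly decreasing vector $(a'_1+r-1,\dots,a'_r)$, each carrying coefficient $\pm1$ (the sign of the corresponding permutation), and the strictly decreasing arrangement occurs with coefficient $+1$. Consequently the target monomial occurs in $\det(u_i^{a'_j+r-j})$ iff the multisets $\{a'_j+r-j\}$ and $\{a_i+r-i\}$ agree, and as both are strictly decreasing this forces $a'=a$. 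Hence the extracted coefficient equals $c_J^{(a_1,\dots,a_r)}(z)$, which is the desired scalar identity.

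I do not anticipate a genuine obstacle: the whole argument is a bookkeeping of monomial exponents. The one place needing care is the injectivity used in the previous paragraph, namely that the map $a'\mapsto(a'_1+r-1,\dots,a'_r)$ sends distinct partitions to distinct strictly decreasing exponent vectors, so that only $a'=a$ survives; beyond that I must simply keep the shift $l_i=a_i+r-i+1$ consistent and confirm that ${\sum}'$ omits every monomial all of whose exponents are $\equiv-1\pmod p$. As a consistency check, this shift is a bijection between $\{a:kg-1\ge a_1\ge\cdots\ge a_r\ge0\}$ and the index set $\{kg+r-1\ge l_1>\cdots>l_r\ge1\}$ of Theorem \ref{thm li}.
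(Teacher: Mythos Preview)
Your argument is correct and is precisely the route the paper has in mind: its entire proof is the single sentence ``The proof follows from the definition of $I^{(a_1+r,\dots,a_r+1)}(z)$ in \eqref{def I},'' and your three paragraphs simply unpack that definition against the Schur expansion used to define the $c_J^{(a_1,\dots,a_r)}(z)$. Nothing further is needed.
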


\begin{proof}  The proof follows from the definition of 
$I^{(a_1+r, a_2+r-1,\dots,a_r+1)}(z)$ in \eqref{def I}.
\end{proof}

Formula \eqref{Ca} can be reformulates as follows. For any integers
$(l_1,\dots,l_r)$, $kg+r-1\geq l_1>\dots>l_r\geq 1$,
the $p$-hypergeometric solution 
 $[I^{(l_1,\dots,l_r)}(z)]_p$ is given by the formula
\bean
\label{phC} 
[I^{(l_1,\dots,l_r)}(z)]_p =
\left\langle \om^{(l_1-r,\, l_2-r+1,\,\dots,\,l_r-1)}, \,\mc C (\Phi(t,z) W(t,z) dt)\right\rangle.
\eean

\vsk.4>
\noindent
Notice that  $\Phi(t,z) W(t,z) dt$ is the integrand of the integral representation of complex
hypergeometric solutions of the \KZ/ equations, see 
\eqref{intrep}, while $[I^{(l_1,\dots,l_r)}(z)]_p$ is a solution 
of the \KZ/ equations over $\F_p$. 

\vsk.2>
Formula \eqref{phC} for $r=1$ and two prime numbers $(p>q)$ not necessarily of type 1
is the subject of \cite[Theorem 6.2]{SlV}.

\bigskip

\end{document}